\theoremstyle{plain}
\newtheorem{theorem}{Theorem}[section]
\theoremstyle{definition}
\newtheorem{definition}[theorem]{Definition}
\theoremstyle{remark}
\DeclareMathOperator{\R}{\mathbb{R}}
\DeclareMathOperator{\E}{\mathbb{E}}
\newcommand{\bvec}[1]{\mathbf{#1}}
\newcommand\norm[1]{\left\lVert#1\right\rVert}
\DeclareMathOperator*{\argmax}{arg\,max}
\theoremstyle{definition}
\theoremstyle{definition}
\icmltitlerunning{E(3)-Equivariant Actor-Critic Methods for Cooperative Multi-Agent Reinforcement Learning}
\begin{document}

\twocolumn[
\icmltitle{${\rm E}(3)$-Equivariant Actor-Critic Methods for Cooperative Multi-Agent Reinforcement Learning}



\icmlsetsymbol{equal}{*}

\begin{icmlauthorlist}
\icmlauthor{Dingyang Chen}{yyy}
\icmlauthor{Qi Zhang}{yyy}
\end{icmlauthorlist}

\icmlaffiliation{yyy}{Artificial Intelligence Institute, University of South Carolina, Columbia, SC, USA}

\icmlcorrespondingauthor{Dingyang Chen}{dingyang@email.sc.edu}
\icmlcorrespondingauthor{Qi Zhang}{qz5@cse.sc.edu}

\icmlkeywords{Machine Learning, ICML}

\vskip 0.3in
]



\printAffiliationsAndNotice{} 

\begin{abstract}
Identification and analysis of symmetrical patterns in the natural world have led to significant discoveries across various scientific fields, such as the formulation of gravitational laws in physics and advancements in the study of chemical structures. In this paper, we focus on exploiting Euclidean symmetries inherent in certain cooperative multi-agent reinforcement learning (MARL) problems and prevalent in many applications. We begin by formally characterizing a subclass of Markov games with a general notion of symmetries that admits the existence of symmetric optimal values and policies. Motivated by these properties, we design neural network architectures with symmetric constraints embedded as an inductive bias for multi-agent actor-critic methods. This inductive bias results in superior performance in various cooperative MARL benchmarks and impressive generalization capabilities such as zero-shot learning and transfer learning in unseen scenarios with repeated symmetric patterns.
\end{abstract}

\section{Introduction}
\label{sec:Introduction}
It is widely believed by scientists that the organization and operation of our universe follow certain symmetry patterns and principles. These symmetry structures in physics lead to profound implications, such as the existence of conservation laws. 
When done properly, artificial intelligence (AI) can and has already benefited tremendously from exploiting these symmetries, with perhaps the most well-known example of convolutional neural networks (CNNs) being translation invariant to the input images \cite{Goodfellow-et-al-2016}.
Symmetries have also been identified and exploited for single-agent reinforcement learning (RL), where symmetric state-action pairs essentially define a homomorphism from the original Markov decision process (MDP) to a smaller abstract MDP \cite{ravindran2001symmetries}, which has been recently shown to be effective for deep single-agent RL \cite{van2020mdp,wang2022so,zhao2022integrating,mondal2022eqr,nguyen2023equivariant}. 

In this paper, we are interested in cooperative multi-agent reinforcement learning (MARL) problems, where symmetry structures are also prevalent and often exist in two forms.
The first form is permutation invariance which exists if agents are homogenous in terms of their effects on state dynamics and reward function, which has been well-studied in prior works, including both formalisms 
\cite{nguyen2017policy,yang2018mean,chen2022communicationefficient} 
and algorithmic techniques such as actor parameter sharing 
\cite{lowe2017multi,rashid2020monotonic,chen2022communicationefficient}, 
permutation-invariant centralized critics \cite{liu2020pic}, 
and mean-filed approximation \cite{yang2018mean}. 
We focus on the second form of Euclidean symmetries, including transformations of translation, rotation, and reflection, which exist if the MARL problem is situated in an Euclidean space.
Intuitively, MARL problems exhibit Euclidean symmetries whenever coordinate frames are used to provide references for the agents and their environment, because Euclidean transformations can be simply viewed as being applied to the reference frames, without changing the essence of the agents and environment.
We are particularly dealing with 3D Euclidean, i.e., ${\rm E}(3)$, multi-agent symmetries, which are prevalent in applications grounded in the physical world.

Although prevalent, multi-agent Euclidean symmetries are relatively underexplored, including only \citet{pol2022multiagent} and \citet{yu2023esp} as the prior work we know of.
Although sharing the same motivation, prior work falls short of providing ``genuine'' ${\rm E}(3)$ multi-agent symmetries in the sense that their Euclidean equivariance is only preserved for rotations of discrete angles in $\{k* 360^\circ/n\}_{k=0}^{n-1}$, which degenerates to the cyclic symmetries of ${\rm C}_n$, a subgroup of ${\rm E}(2)$ that is only 2D.
We identify challenges that have prevented prior work from exploiting continuous ${\rm E}(3)$ multi-agent symmetries and highlight our solutions as the three-fold contribution of this work:

\noindent\textbf{(i)} The first challenge is to find problem representations that are suitable to describe ${\rm E}(3)$-symmetries that are distributed among multiple agents.
We first rigorously formulate a subclass of Markov games (MGs) \cite{shapley1953stochastic}, group-symmetric MGs, that describes multi-agent symmetries by mathematical group transformations performed on states, actions, and agents' observations.
With ${\rm E}(3)$ being the special case, we use 3D point clouds to represent states, actions, and observations in a way that conveniently accommodates all transformations in ${\rm E}(3)$.

\noindent\textbf{(ii)} The second challenge is to develop architectures that are capable of exploiting continuous ${\rm E}(3)$-symmetries.
We prove several main properties of group-symmetric MGs, including symmetries in the value functions of symmetric policies and the existence of an optimal policy/value that is group-symmetric.
As a key difference from prior work, we exploit those properties by leveraging steerable message passing neural networks \cite{thomas2018tensor,e3nn_paper,brandstetter2022geometric} as the actor-critic architecture for MARL under the centralized training and decentralized execution (CTDE) paradigm, which are capable of preserving equivariance under all ${\rm E}(3)$ transformations. 

\noindent\textbf{(iii)} As a result, our method achieves superior sample efficiency and generalization performance in a range of benchmark MARL tasks that exhibit continuous ${\rm E}(3)$-symmetries but were not accommodated by prior work.

\section{Related Work}
\label{sec:Related Work}
In single-agent RL, the exploitation of symmetries has been originally formulated by \citet{ravindran2001symmetries} to reduce the redundancies in state and action spaces. Recent work mostly focuses on the combination of symmetries and deep RL. Some works \cite{laskin2020reinforcement,yarats2021image} utilize data augmentation, where an image-based observation undergoes transformation such as rotation and translation, to improve data efficiency. This approach mostly focuses on the invariant value functions with respect to the transformed input. 
Instead of generating more data through symmetric transformations which increases computation time, \citet{van2020mdp} build equivariant neural policies that directly support $C_n$ symmetries of discrete rotations. \citet{wang2022so2equivariant} extend $C_n$ symmetries to more general continuous $SO(2)$ symmetries that support equivariancy in translation and rotation in 2D. A recent work \cite{pmlr-v202-chen23i} considers a more general continuous ${\rm E}(3)$ symmetries in 3D space that additional supports equivariancy in reflection. 

Multi-agent symmetries are currently underexplored. Some work \cite{liu2020pic,chen2022communicationefficient} focus on the permutation invariance of homogeneous agents, and build permutation invariant value functions for better sample and computation efficiency. The work by \citet{li2021pairwise} considers symmetries specific to multi-agent pathfinding problems to reduce the search space. A recent work by \citet{pol2022multiagent} considers $C_n$ symmetries in MARL by extending the framework in the single-agent counterpart. However, it does not provide motivation as to why such symmetries result in equivariant actors and critics, and implementation-wise, their homomorphism network has an input size that scales with $|C_n|$. Instead, our work focuses on continuous ${\rm E}(3)$ multi-agent symmetries, and the implementation has the input size of the original observation dimension and still preserves equivariance for any continuous angle.

\citet{yu2023esp} share the same motivation of exploiting Euclidean symmetries for cooperative MARL. Our key differences lie in methodology and theoretical results.
Our work uses point cloud representations to characterize continuous ${\rm E}(3)$ symmetries, which facilitates the incorporation of ${\rm E}(3)$-equivariant neural networks, while \citet{yu2023esp} use traditional flat vector representations, which are not compatible with equivariant neural networks and they turn to data augmentation which is a less principled method.
Our theoretic results state the main properties of symmetric MGs. Besides the optimal value equivalence property included in \citet{yu2023esp}, we include theoretical results pertaining to observation-based policies, which are unique to multi-agent settings and provide theoretical justification for equivariant multi-agent actor-critic methods.

This work focuses on Euclidean symmetries commonly seen in practice, e.g., position and velocity in 3D space. Special neural architectures have been proposed to directly incorporate geometric inductive bias. Some works \cite{pmlr-v139-satorras21a,schutt2021equivariant,jing2020learning,le2022representation} preserve
equivariancy by equivariant operations in the original 3D Euclidean space. In contrast, other works lift the physical quantities from 3D space to higher-dimensional spaces for more expressive power, either through Lie algebra \cite{pmlr-v119-finzi20a} or spherical harmonics \cite{thomas2018tensor} and message passing \cite{brandstetter2022geometric}, which we adopt for actor-critic architectures. 

\section{Preliminaries}
\label{sec:Preliminaries}
\subsection{Cooperative Markov Games}
We use the framework of cooperative Markov game (MG) \cite{shapley1953stochastic} to formulate our cooperative multi-agent setting, which consists of
$N$ agents indexed by $i\in\mathcal{N}:=\{1,...,N\}$,
state space $\mathcal{S}$,
joint action space $\mathcal{A} = \mathcal{A}^1\times\cdots\times\mathcal{A}^N$ factored into local action spaces,
transition function $P: \mathcal{S}\times\mathcal{A}\to\Delta(\mathcal{S})$,
(team) reward function $r: \mathcal{S}\times\mathcal{A}\to\R$,
and initial state distribution $\mu \in \Delta(\mathcal{S})$,
where we use $\Delta(\mathcal{X})$ to denote the set of probability distributions over $\mathcal{X}$.
The MG evolves in discrete time steps:
at each time step $t$, all agents are in some state $s_t\in\mathcal{S}$ and each agent $i\in\mathcal{N}$ chooses its local action $a^i_t\in\mathcal{A}^i$, forming a joint action $a_t = (a^1_t,...,a^N_t)\in\mathcal{A}$ that induces a transition to a new state at the next time step according to the transition function, i.e., $s_{t+1}\sim P(\cdot | s_t, a_t)$, and the team reward signal $r_t:=r(s_t,a_t)$.
The state $s_0\sim\mu$ at time step $0$ is drawn from the initial state distribution.

\noindent\textbf{Full and partial observability.}
For ease of exposition, we will assume the agents can fully observe the states until Section \ref{sec:G-symmetric Markov games with Euclidean symmetries}, and our methods (Section \ref{sec:E(3)-equivariant multi-agent actor-critic methods}) and experiments (Section \ref{sec:Experiments}) accommodate the partial observability setting.
We shall consider a more general notion of full observability than directly observing the raw state:
each agent $i\in\mathcal{N}$ has access to a function $o^i: \mathcal{S}\to\mathcal{O}^i$ that maps the state space to its local observation space $\mathcal{O}^i$.
The agents effectively fully observe the state if and only if $o^i$ is bijective.
We will use the notion of $o=\{o^i\}_{i\in\mathcal{N}}$ and write $o(s) := \left(o^1(s),...,o^N(s)\right) \in \mathcal{O} :=  \times_{i\in\mathcal{N}}\Delta(\mathcal{O}^i)$.
The MG is therefore defined by tuple
$\langle \mathcal{N},\mathcal{S},\mathcal{A}, P, r, o \rangle$.

\noindent\textbf{Policies and values.}
The general, state-based joint policy, $\pi:\mathcal{S}\to\Delta(\mathcal{A})$, maps from the state space to distributions over the joint action space.
As the size of action space $\mathcal{A}$ grows exponentially with $N$, the commonly used joint policy subclass is the {\em product policy},  $\pi=(\pi^1,\cdots,\pi^N):\mathcal{S}\to\times_{i\in\mathcal{N}}\Delta(\mathcal{A}^i)$, which is factored as the product of local policies $\pi^i:\mathcal{S}\to\Delta(\mathcal{A}^i)$, $\pi(a|s) = \prod_{i\in\mathcal{N}}\pi^i(a^i|s)$, each mapping the state space only to the action space of an individual agent. 
Define the discounted return from time step $t$ as $R_t = \sum_{l=0}^{\infty} \gamma^l r_{t+l}$. For agent $i$, product policy $\pi=(\pi^1,...,\pi^N)$ induces a value function defined as $V_\pi(s_t) = \E_{s_{t+1:\infty}, a_{t:\infty}\sim\pi}[R_t|s_t]$, and action-value function $Q_\pi(s_t, a_t) = \E_{s_{t+1:\infty}, a_{t+1:\infty}\sim\pi}[R_t|s_t, a_t]$.
Following policy $\pi$, agent $i$'s cumulative reward starting from $s_0\sim\mu$ is denoted as $V_\pi(\mu):=\E_{s_0\sim\mu}[V_\pi(s_0)]$.
Under full observability, a state-based local policy can translate from and to an observation-based policy with $\nu^i(o^i(s))=\pi^i(s)$
and the corresponding observation-based product policy and its values are denoted as $\nu(o(s)) = (\nu^1(o^1(s)),...,\nu^N(o^N(s)))$, $Q_\nu$, $V_\nu$, respectively.

\subsection{Groups and Transformations}
Mathematical symmetry means a type of invariance: a property of an object remaining unchanged after some transformation.
This notion of symmetry can be formally described by invariant functions. A function $f:\mathcal{X}\to\mathcal{Y}$ is {\em invariant} under transformation operator $T: \mathcal{X}\to\mathcal{X}$ if $f\left(T[x]\right) = f(x)$ for any $x\in\mathcal{X}$.
More generally, function $f:\mathcal{X}\to\mathcal{Y}$ is {\em equivariant} under transformation operators $T: \mathcal{X}\to\mathcal{X}$ and $T': \mathcal{Y}\to\mathcal{Y}$ if $f\left(T[x]\right) = T'\left[ f(x)\right]$ for any $x\in\mathcal{X}$.
Most often, a symmetric object is invariant to not only one transformation but a set of transformations.
Such a set of symmetry transformations often forms a (mathematical) group $G$, which is a set of elements equipped with a binary operator satisfying the group axioms of closure, associativity, identity, and inverse \cite{dummit1991abstract}.
For function $f$ to be equivariant to group $G$, there exists transformation operators $T_g$ and $T'_g$, called {\em group actions}, associated with each group element $g\in G$, such that $f\left(T_g[x]\right) = T'_g\left[ f(x)\right]$ for any $x\in\mathcal{X}$.

\noindent\textbf{The Euclidean group ${\rm E}(3)$.}
We are mostly interested in group ${\rm E}(3)$ that comprises the group actions of translations, rotations, reflections, and finite combinations of them in 3D.
Therefore, it has as its subgroups the 3D translation group ${\rm T}(3)$, and the orthogonal group ${\rm O}(3)$ for 3D rotations and reflections.
The group actions of ${\rm O}(3)$ can implemented by multiplying square matrices, called {\em representation matrices}.
For example, for the group element of $g=(\alpha, \beta, \gamma)$ that rotates a 3D vector $x=\bvec{x}\in\R^3$ by $\alpha$, $\beta$, and $\gamma$, about $x$-, $y$-, and $z$- axes respectively, the rotation can be represented by matrix multiplication as $T_g[x] = \bvec{R}_{\alpha, \beta, \gamma}\bvec{x}$, where $\bvec{R}_{\alpha, \beta, \gamma}\in\R^{3\times3}$ is the 3D rotation matrix.

\begin{figure*}[tb]
\begin{center}
\includegraphics[width=\textwidth]{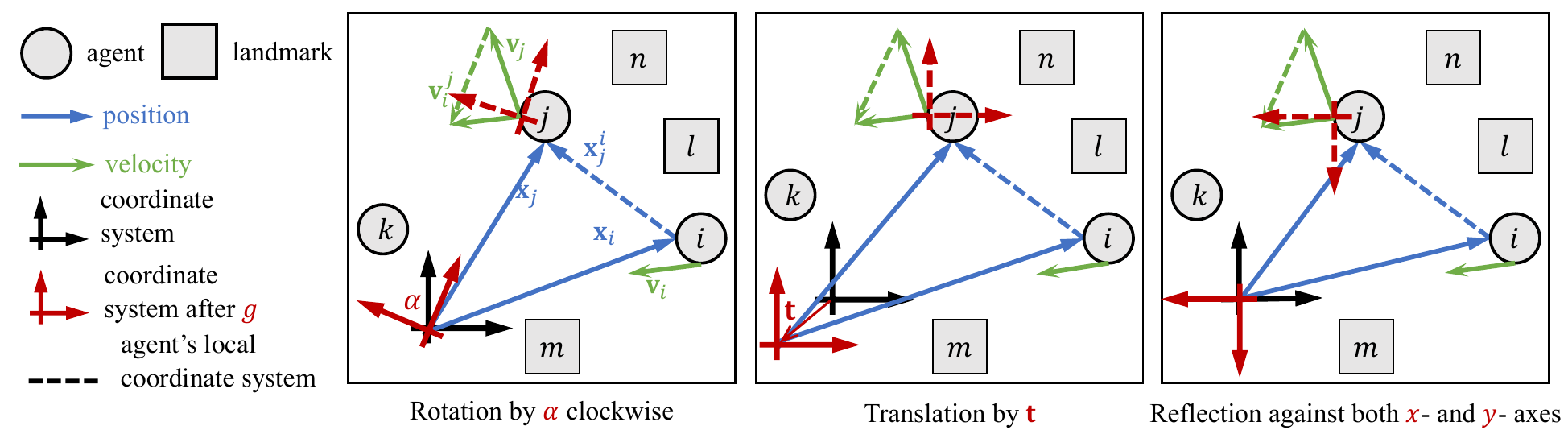}
\caption{Illustration of Cooperative Navigation ($N=3$) and its Euclidean symmetries.}
\label{fig:illustration}
\end{center}
\end{figure*}

\section{Markov Games with Euclidean Symmetries}
\label{sec:G-symmetric Markov games with Euclidean symmetries}
With the notions introduced, we are ready to define group-symmetric MGs by asking the transition, reward, and observations functions to be equivariant under group actions, as stated in Definition \ref{definition:G-symmetric MG}.
\begin{definition}[$G$-symmetric MG]
\label{definition:G-symmetric MG}
Consider MG $\langle \mathcal{N},\mathcal{S},\mathcal{A}, P, r, o \rangle$ and a (mathematical) group $G$ equipped with group actions $\{L_g: \mathcal{S} \to \mathcal{S}, K_g^s: \mathcal{A} \to \mathcal{A}, I_g^{s,a}: \R \to \R, H_g^s: \mathcal{O} \to \mathcal{O}\}_{g\in G}$.
The MG is {\em $G$-symmetric} if, for any $s,s'\in\mathcal{S}$, $a\in\mathcal{A}$, and $g \in G$, we have
\begin{align*}
    P(s'|s, a) =&~P\left(L_g[s'] ~|~ L_g[s], K_g^s[a]\right), \\
    r\left(L_g[s], K_g^s[a]\right) =&~I_g^{s,a} \left[r(s, a)\right], ~
    o\left(L_g[s]\right) =~H_g^s\left[o(s)\right]
\end{align*}
i.e., 
$P$, $r$, and $o$
are equivariant to the group actions.
\end{definition}

Definition \ref{definition:G-symmetric MG} is general enough to accommodate arbitrary symmetries in MGs.
For example, choosing the group actions to be permutations of $\mathcal{N}$ (i.e., $G={\rm S}(N)$) can describe permutation invariance between homogenous agents \cite{liu2020pic,chen2022communicationefficient}.

Moreover, these group actions in general are allowed to depend on the state and/or the action.
In this work, we focus on the special case of 3D Euclidean symmetries in MGs, i.e., $G={\rm E}(3)$, where the group actions are Euclidean transformations and do not depend on the state or the action.

For ease of exposition, we will ground our discussion with a running example of Cooperative Navigation, a popular cooperative multi-agent benchmark task \cite{lowe2017multi}.

\noindent\textbf{MPE's Cooperative Navigation and its symmetries.}
In Cooperative Navigation of Multi-Agent Particle Environment (MPE) \citep{lowe2017multi}, a popular MARL benchmark, $N$ agents move as a team to cover $N$ landmarks in a 2D space. The landmarks are randomly initialized at the beginning of an episode and fixed throughout the episode. The reward functions determine the reward for all agents according to the distances between all agents and the landmarks to encourage coverage, as well as penalties to discourage collisions if any. Under full observability, the observation of an agent contains its own absolute location and velocity, the relative locations and velocities of all other agents, and the relative locations of all the nearest landmarks.
Figure \ref{fig:illustration} (left) shows the case where $N=3$ and illustrates the rotational symmetry therein. The states and actions can be represented as 2D vectors under a global reference frame (e.g., agent $i$'s position, $\bvec{x}_i$), whereas each agent's observation is 2D vectors referenced to its local frame (e.g., agent $i$'s velocity relative to agent $j$, $\bvec{v}^j_i = \bvec{v}_i - \bvec{v}_j$).
Upon a rotation of all entities, which is shown in Figure \ref{fig:illustration} (left) as a rotation of the reference frames, the states, actions, and observations are rotated accordingly, yet the essence of the entities remain unchanged, and therefore the transition, reward, and observation functions are equivariant/invariant.
Similarly, it is easy to see that Cooperative Navigation also exhibits translational and reflectional symmetries, as shown in Figure \ref{fig:illustration} (middle and right, respectively).
Further, it is straightforward to extend these symmetries from 2D to 3D, thus making Cooperative Navigation ${\rm E}(3)$-symmetric. 

With the intuition from Cooperative Navigation, we now formally define ${\rm E}(3)$-symmetric MGs in Definition \ref{definition:E(3)-symmetric MGs}, where we will represent states and observations as 3D point clouds and specify the corresponding group actions applied to them.

\begin{definition}[${\rm E}(3)$-symmetric MGs]
\label{definition:E(3)-symmetric MGs}
MG $\langle \mathcal{N},\mathcal{S},\mathcal{A}, P, r, o \rangle$ is {\em ${\rm E}(3)$-symmetric} if the following conditions hold.
\begin{enumerate}[label=(\roman*),wide,itemsep=-.5pt]
\item \label{item:state_point_cloud}
A state is a 3D point cloud consisting of agents $\mathcal{N}$ and other uncontrollable entities $\mathcal{M}$, i.e., $s=\{(\bvec{x}_v,\bvec{f}_v)\}_{v\in\mathcal{V}}$, where 
    $\mathcal{V} = \mathcal{N} \cup \mathcal{M}$ is the set of all entities,
    $\bvec{x}_v\in\R^3$ is the 3D position vector of entity $v$ with feature vector $\bvec{f}_v\in\R^{d_v}$.

\item \label{item:action_space}
The local action spaces are Euclidean spaces, i.e., $\bvec{a}^i\in\mathcal{A}^i =\R^{d_{\bvec{a}^i}},\forall i\in\mathcal{N}$.

\item \label{item:observation_point_cloud}
Agent $i$'s observation is represented by a 3D point cloud consisting of the entities in its own view,
$o^i(s)=\{(\bvec{x}_v^i,\bvec{f}_v^i)\}_{v\in\mathcal{V}}$ where
$\bvec{x}_v^i\in\R^3$ is the 3D position vector of entity $v$ with feature vector $\bvec{f}_v^i\in\R^{d_{\bvec{f}_v^i}}$, both relative to  $i$.
\end{enumerate}

The MG's $P$ and $o$ are equivariant and $r$ is invariant
to the ${\rm E}(3)$ group actions in Definition \ref{definition:G-symmetric MG} that are specified as follows:

\begin{enumerate}[label=(\roman*), wide,itemsep=-.5pt]
\setcounter{enumi}{3}

\item \label{item:translation}
Under translation, the state is transformed with only the entities' positions being translated, i.e., 
$s=\{(\bvec{x}_v,\bvec{f}_v)\}_{v\in\mathcal{V}} 
~\to_g~ 
s=\{(\bvec{x}_v+\bvec{t}_g,\bvec{f}_v)\}_{v\in\mathcal{V}} 
$ where $\bvec{t}_g\in\R^3$ is the 3D translation for $g\in {\rm T}(3)$, 
while actions and observations remain unchanged.

\item  \label{item:rotation_reflection}
Under rotations or reflections,
all vectors in states, local actions, and observations are transformed according to their respective group representations.
For example, the state transforms as
$s=\{(\bvec{x}_v,\bvec{f}_v)\}_{v\in\mathcal{V}} 
~\to_g~ 
s=\{(\bvec{D}_g^{\bvec{x}} \bvec{x}_v,\bvec{D}_g^{\bvec{f}} \bvec{f}_v)\}_{v\in\mathcal{V}} 
$ where $\bvec{D}_g^{\bvec{x}}$ and $\bvec{D}_g^{\bvec{f}}$ are the representation matrices of $g\in {\rm O}(3)$ for the vector spaces of $\bvec{x}_v$ and $\bvec{f}_v$, respectively.
\end{enumerate}
\end{definition}

In essence, Definition \ref{definition:E(3)-symmetric MGs} requires the states and observations to be represented as 3D point clouds with the entities being the points with corresponding features, which are transformed according to Euclidean symmetries that are intuitively just changes of the reference frames.
The actions can be understood as features associated with the subset of agents that also are transformed accordingly.
In the example of Cooperative Navigation, 
the point feature vector for an entity $v\in\mathcal{V}$ includes its entity type to differentiate from agent vs landmark, which in our experiments is represented by a one-hot vector of two classes, $\bvec{f}_v\in\R^2$;
the action of agent $i\in\mathcal{N}$ includes its velocity, $\bvec{a}^i=\bvec{v}_i\in\mathcal{A}^i=\R^3$ with dummy $z$ values.

\noindent\textbf{Examples of ${\rm E}(3)$-symmetric MGs.}
Besides Cooperative Navigation, all other scenarios in MPE are ${\rm E}(3)$-symmetric.
Further, by Definition \ref{definition:E(3)-symmetric MGs}, any MG involving multiple entities interacting in a 2D/3D space is ${\rm E}(3)$-symmetric, which includes many real-world applications that are grounded in physical words, such as multi-robot systems, video games, materials design, etc.
Two other domains in our experiments that exhibit ${\rm E}(3)$-symmetric are the continuous control tasks in DeepMind Control Suite and game scenarios in StarCraft Multi-Agent Challenge, both being popular MARL benchmarks, with more details presented with our experiments. 
We give more examples of ${\rm E}(3)$-symmetric MGs in the appendix.

We would like to remark that,
in Definition \ref{definition:E(3)-symmetric MGs} that applies to the MARL benchmarks, an agent's observation uses a local reference frame with the agent’s position as the origin and with its orientation aligned with the global reference frame. So the agent’s own reference frame is the same as the global one up to a position shift, and that is the fundamental reason why observations, just like states, are transformed under rotations/reflections.
It is also straightforward to extend this definition (and the method) to the case where local reference frames have orientations different from the global reference frame, with several key points in doing so:
1) An agent’s local orientation should be part of its features and therefore part of the global state. Precisely, in Definition \ref{definition:E(3)-symmetric MGs}, the orientation should be included in feature vector $\mathbf{f}_v$, where $v$ is the vertex in the state Euclidean graph corresponding to the agent;
2) Upon a rotation, this orientation in feature vector $\mathbf{f}_v$ should be rotated accordingly;
and 3) If the observation is represented using the local reference frame, then it is often invariant to rotations, since the local reference frame is also rotated. 

We are now ready to derive the main properties of group symmetric MGs:
since the transition, reward, and observation functions are group-symmetric, if the policy is also group-symmetric, then we can expect its value functions are invariant under the group actions.
For example, under a rotation angle $\alpha$ if all agents in Cooperative Navigation always choose their velocities with the same rotation angle $\alpha$, then the policy value remains unchanged.
Definition \ref{definition:$G$-invariant MG policies} formally states this requirement in general $G$-symmetric MGs for both state-based and observation-based policies.

\begin{definition}[$G$-invariant MG policies]
\label{definition:$G$-invariant MG policies}
Let $\pi:\mathcal{S}\to\Delta(\mathcal{A})$ be a state-based policy in a $G$-symmetric MG.
We say $\pi$ is {\em $G$-invariant} if it is invariant to the group actions of $G$, i.e., for any $s\in\mathcal{S}$, $a\in\mathcal{A}$, and $g \in G$, 
    $
       \pi(a|s) =  \pi(K_g^s[a]~|~L_g[s])
    $.
Similarly, an observation-based policy $\nu:\mathcal{O}\to\times_{i\in\mathcal{N}}\Delta(\mathcal{A}^i)$ is {\em $G$-invariant} if
$
       \nu(a|o(s)) =  \nu(K_g^s[a]~|~H^s_g[o(s)]) 
$
for any $s\in\mathcal{S}$, $a\in\mathcal{A}$, and $g \in G$.
\end{definition}

Under the CTDE paradigm and bijective observation functions, the value functions of an observation-based product policy $\nu$ can directly condition on states, i.e., $V_\nu(s)$ and $Q_\nu(s,a)$.
We list the properties for observation-based $G$-invariant product policies and their state(-action) value functions in Theorem \ref{theorem:Main properties of $G$-symmetric MGs} below, which justifies how our method will exploit these properties under the CTDE paradigm. 

\begin{theorem}[Main properties of $G$-symmetric MGs, proof in the appendix]
\label{theorem:Main properties of $G$-symmetric MGs}
For a $G$-symmetric MG,
\begin{enumerate}[label=(\roman*),wide,itemsep=-.5pt]
\item \label{item:optimal G-invariant values}
The optimal values are $G$-invariant,
$
    V_*(s) =  V_*(L_g[s])
$, 
$
    Q_*(s,a) =  Q_*(L_g[s],K_g^s[a])
$.
\item \label{item:optimal G-invariant policy}
There exists an observation-based policy $\nu$ that are $G$-invariant and optimal,
$
       V_\nu(s) =  V_*(s), 
       Q_\nu(s,a) =  Q_*(s,a)
$.
\end{enumerate}
Further, for a $G$-invariant observation-based policy $\nu$,
\begin{enumerate}[label=(\roman*),wide,itemsep=-.5pt]\setcounter{enumi}{2}

\item \label{item:value equivalence}
Its value function is $G$-invariant: for any $s\in\mathcal{S}$, $a\in\mathcal{A}$, and $g \in G$, 
$
    V_\nu(s) =  V_\nu(L_g[s]),
    Q_\nu(s,a) =  Q_\nu(L_g[s],K_g^s[a])
$.

\item \label{item:G-invariant policy gradient}
Similarly, if $\nu$ is parameterized by $\theta$ as $\nu_\theta$ and differentialble, then
$
    \nabla_\theta \nu_\theta(a|o(s)) 
    =
    \nabla_\theta \nu_\theta(K^s_g[a]~|~o(L_g[s])) 
$.
\end{enumerate}
\end{theorem}
\noindent
Theorem \ref{theorem:Main properties of $G$-symmetric MGs} extends prior works on properties in single-agent symmetric MDPs \cite{ravindran2001symmetries,rezaei2022continuous}, and we are the first to take care of the distributed nature of the symmetries in MGs to make a rigorous proof. 
Theorem \ref{theorem:Main properties of $G$-symmetric MGs} establishes the properties our method will exploit next.
 
\section{\texorpdfstring{${\rm E}(3)$}{Lg}-Equivariant Multi-Agent Actor-Critic}
\label{sec:E(3)-equivariant multi-agent actor-critic methods}
The properties stated in Theorem \ref{theorem:Main properties of $G$-symmetric MGs} naturally prompt us with the idea of adopting group-invariant architectures for cooperative MARL.
In this work, we consider multi-agent actor-critic methods, such as MADDPG \cite{lowe2017multi} and MAPPO \cite{yu2022surprising}. 
Specifically, 
properties \ref{item:optimal G-invariant values} and \ref{item:optimal G-invariant policy} in Theorem \ref{theorem:Main properties of $G$-symmetric MGs} suggests that we can reduce the search of optimality within group-invariant functions for actors and critics.
Moreover, properties \ref{item:value equivalence} and \ref{item:G-invariant policy gradient} imply that group-invariant actors enjoy symmetric policy gradients:
\begin{align*}
    &\nabla_\theta \pi_\theta(a|s)  \cdot Q_{\pi_\theta}(s,a)
    \\=&
    \nabla_\theta \pi_\theta(K^s_g[a]~|~L_g[s])
    \cdot Q_{\pi_\theta}(L_g[s],K_g^s[a])
\end{align*}
which suggests that the optimization landscape is symmetric and therefore can be more efficiently optimized via gradient-based search \cite{zhao2022symmetry,zhao2023symmetries}.

\begin{figure}[b]
\begin{center}
\includegraphics[width=0.67\columnwidth]{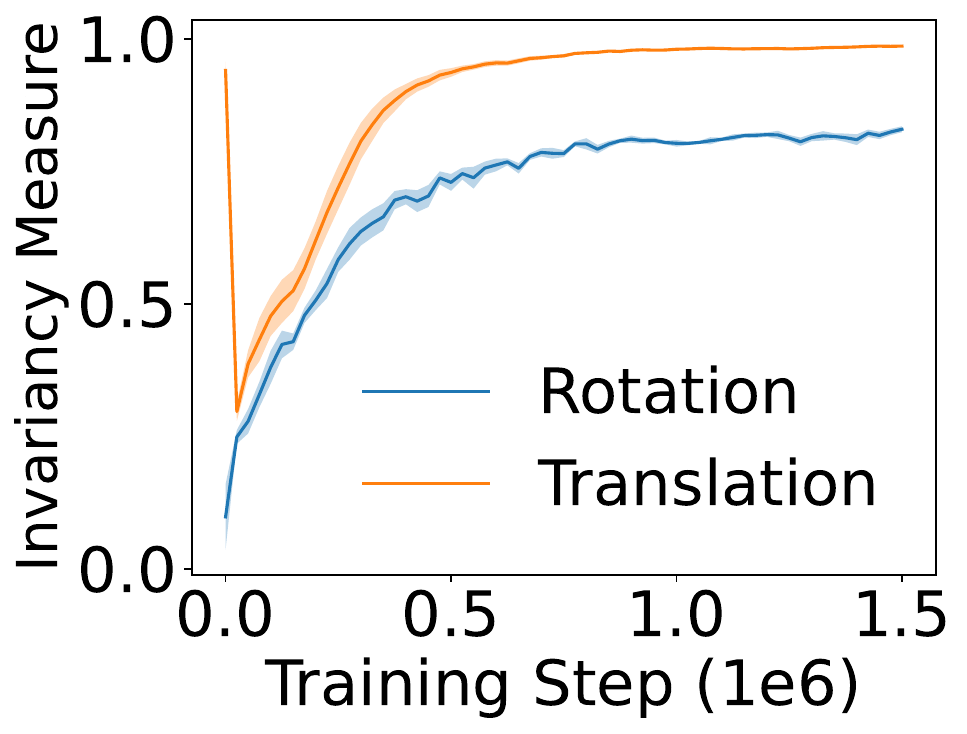}
\caption{Emergence of rotation- and translation-invariancy in MLP actors trained on 3-agent Cooperative Navigation.}
\label{fig:motivation_emergence}
\end{center}
\end{figure}

This idea is further motivated by our empirical analysis that finds emergence of group-invariancy in traditional actor-critic architectures that have no guaranteed invariancy.
Specifically, we train the agents in Cooperative Navigation ($N=3$) via MADDPG with MLP-based actors and critics.
Figure \ref{fig:motivation_emergence} plots the degree of the (observation-based) actors, $\{\nu^i\}_{i\in\mathcal{N}}$, being invariant to rotations and translations, respectively, during training.
For rotations, we select a finite set of angles, $A=\{30^\circ,60^\circ,\cdots,330^\circ\}$, and quantify the corresponding invariancy measure in state $s$ as 
\begin{align}
\label{eq:rotation_invariancy_measure}
\textstyle
    \frac{1}{|A|N}\sum_{i,\alpha\in A}
    \cos\left(
    {\rm rot}_\alpha[\nu^i(o^i(s)],~
    \nu^i(o^i({\rm rot}_\alpha[s]))
    \right)
\end{align}
where ${\rm rot}_\alpha[\cdot]$ performs the rotation by $\alpha$ and $\cos(\cdot,\cdot)$ measures the cosine similarity.
We measure the translation invariancy similarly.

\subsection{\texorpdfstring{${\rm E}(3)$}{Lg}-Equivariant Message Passing}
\label{subsec:E3-equivariant message passing}
In this work, we implement ${\rm E}(3)$-equivariant/invariant actor-critic architectures with ${\rm E}(3)$-equivariant message passing neural networks (E3-MPNNs) \cite{thomas2018tensor,e3nn_paper,brandstetter2022geometric}, a type of graph neural networks that process 3D {\em Euclidean graphs}, graphs where vertices are located at 3D positions, as an ${\rm E}(3)$-equivariant function.
Formally, an input Euclidean graph is represented as a tuple $G^{\rm in}=(\mathcal{V}, \mathcal{E}, \bvec{x}, \bvec{f}^{\rm in})$ where
$\mathcal{V}$ is a set of vertices with 3D positions $\bvec{x}_v\in\R^3$, 
$\mathcal{E}\subseteq \mathcal{V} \times \mathcal{V}$ is a set of edges,
and $\bvec{f}^{\rm in}$ is a set of feature vectors, each associated with a vertex or an edge.
The backbone of E3-MPNNs is ${\rm E}(3)$-equivariant message passing layers, denoted as $\texttt{E3-MP}(\cdot)$, maps the input Euclidean graph to an output Euclidean graph by updating only the feature set, $G^{\rm out}=(\mathcal{V}, \mathcal{E}, \bvec{x}, \bvec{f}^{\rm out})$, in an ${\rm E}(3)$-equivariant manner,
$
    \texttt{E3-MP}(T_g^{\rm in}[G^{\rm in}]) = T_g^{\rm out}[\texttt{E3-MP}(G^{\rm in})]
$
for $g\in{\rm E}(3)$,
where $T_g^{\rm in}$ and $T_g^{\rm out}$ include Euclidean transformations (i.e., translation, rotation, etc.) applied to the vertices and features in the input and output graph, respectively.
After the message passing layers, E3-MPNNs produce the final output $y\in\mathcal{Y}$ using a graph readout layer, $\texttt{E3-Readout}(\cdot)$, which is also ${\rm E}(3)$-equivariant,
$
    \texttt{E3-Readout}(T_g^{\rm out}[G^{\rm out}]) = T_g^{\mathcal{Y}}[\texttt{E3-Readout}(G^{\rm out})]
$
for $g\in{\rm E}(3)$.
This ensures overall equivariancy from the input graph to the final output.

\subsection{Integration Into Multi-Agent Actor-Critic Methods}
\begin{figure}[b]
\begin{center}
\includegraphics[width=.9\columnwidth]{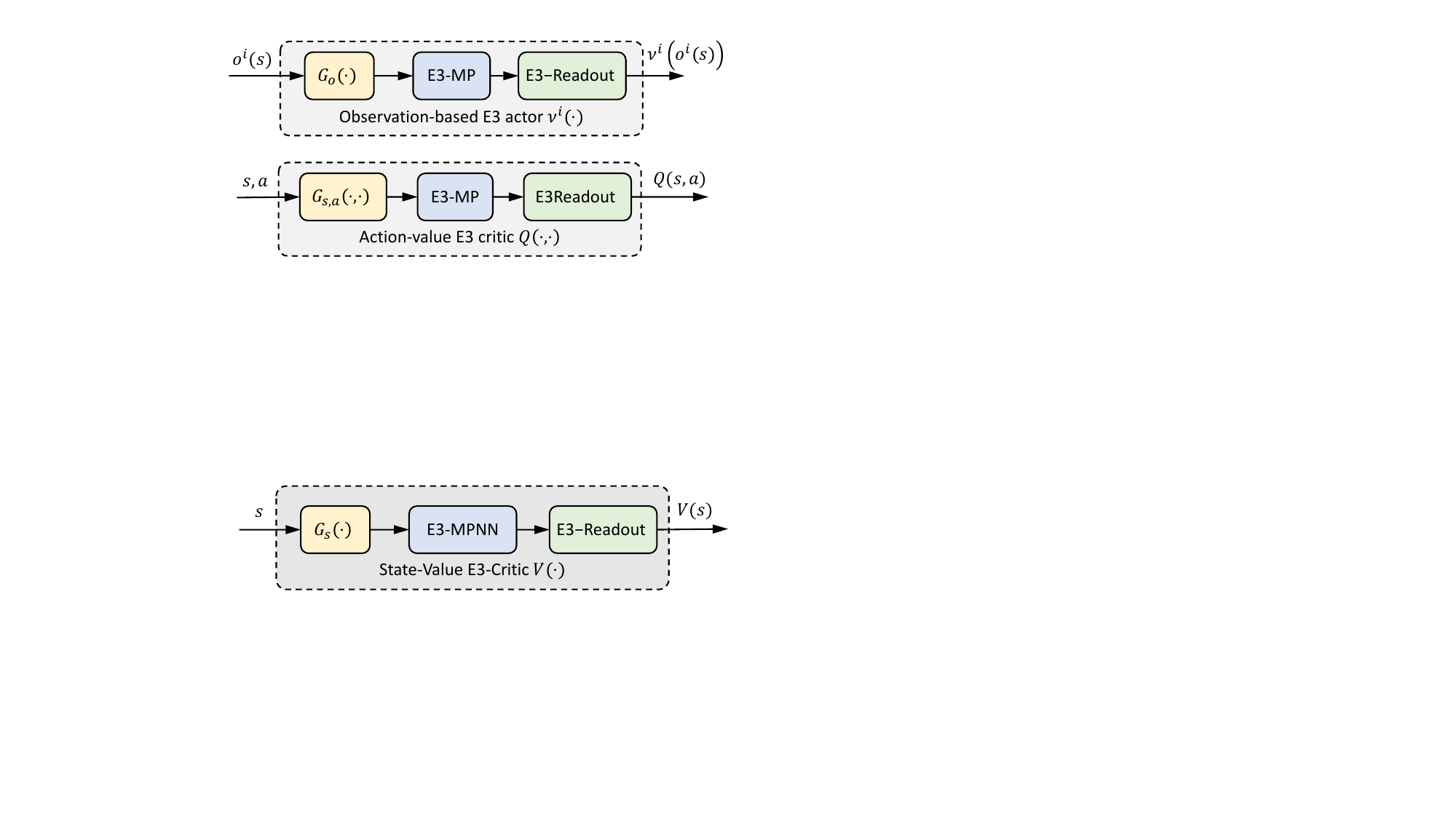}
\caption{Architectures for ${\rm E}(3)$-equivariant MADDPG.}
\label{fig:architecture}
\end{center}
\end{figure}

In order to use E3-MPNNs for multi-agent actor-critic architectures, we need to first represent state $s$, state-action pair $(s,a)$, and observation $o^i(s)$ as Euclidean graphs for centralized state-value critic $V(s)$, action-value critic $Q(s,a)$, and observation-based actor $\nu^i(o^i(s))$, respectively.
This is straightforward for ${\rm E}(3)$-symmetric MGs, because by Definition \ref{definition:E(3)-symmetric MGs} the states and observations are represented as 3D point clouds which can be directly cast into 3D Euclidean graphs if edges are added.
In our experiments, we use some heuristics to add edges, e.g., each vertex is connected with all others as a complete graph or with a fixed number of its nearest neighbors.
For a state-action pair, we simply treat the actions as additional feature vectors of individual agent vertices.
Figure \ref{fig:architecture} illustrates the ${\rm E}(3)$-equivariant action-value critic and observation-based actor for MADDPG.

We below highlight several implementation considerations. 
More details are provided in the appendix.

\noindent\textbf{SEGNN.}
For our actor-critic architecture, we employ steerable ${\rm E}(3)$ equivariant GNNs (SEGNNs), a recently developed E3-MPNN architecture that has superior performance on supervised learning tasks in computational physics and chemistry \cite{brandstetter2022geometric}.
SEGNNs split the input set of feature vectors into vertex feature vectors and edge feature vectors,  $\bvec{f}^{\rm in}=(\bvec{f}^{\rm node},\bvec{f}^{\rm edge})$, which comprises so-called {steerable} feature vectors concatenated by $(2l+1)$-dimensional vectors with $l=0,1,...$, often representing properties of the nodes (i.e., vertices) and edges.
For example, in Cooperative Navigation,  the node feature vector of agent $i$ is $\bvec{f}^{\rm node}_i=[\bvec{v}_i,\norm{\bvec{v}_i}, \bvec{a}_i,\norm{\bvec{a}_i},\rm node\_type]$, which is the concatenation of $(2l+1)$-dimensional vectors with $l\in\{0,1\}$.


\noindent\textbf{Observability.}
We adopt the paradigm of centralized training and decentralized execution, where the centralized critic can fully observe the state, while the local actors are observation-based with partial observability in general. For instance, in Cooperative Navigation, the velocities of other agents are not observable.

Our method and experiments are restricted to memory-less, non-recurrent actor-critic architectures even under partial observability. This is because realizing group-equivariancy in recurrent neural networks needs further technical treatments, which is left for future work.

To enforce uniform dimensionality required by SEGNN, we handle missing quantities for certain nodes due to partial observability by padding dummy vectors of zeros.

\section{Experiments}
\label{sec:Experiments}

\noindent\textbf{Environments.} 
We choose the popular cooperative MARL benchmarks of MPE, MuJoCo continuous control tasks (MuJoCo tasks), including the 2D ones from \citet{tassa2018deepmind} and 3D ones from \citet{pmlr-v202-chen23i} with single- and multi-agent variations, and StarCraft Multi-Agent Challenge (SMAC) \cite{samvelyan2019starcraft} to evaluate the effectiveness of our ${\rm E}(3)$-equivariant multi-agent actor-critic methods described in Section \ref{sec:E(3)-equivariant multi-agent actor-critic methods}.
Specifically, there are three task scenarios chosen in MPE, Cooperative Navigation, Cooperative Push, and Predator and Prey, where the collective goals of the controllable agents are to navigate the landmarks, push a ball to a target location, and catch all the preys, respectively. 
In MuJoCo tasks, we consider the representative tasks of cartpole, single- and multi-agent reacher, single- and multi-agent swimmer, multi-agent 3D hopper, and multi-agent 3D walker. The tasks vary in the degree of $\rm E(3)$ symmetries.
In SMAC, the selected scenarios are 8m\_vs\_9m with \textit{hard} difficulty and 6h\_vs\_8z with \textit{super hard} difficulty. 
These tasks exhibit different levels of ${\rm E}(3)$-symmetries.
All MPE and MuJoCo tasks exhibit perfect ${\rm E}(3)$-symmetries. 
SMAC, however, only exhibits imperfect ${\rm E}(3)$-symmetries due to two reasons:
1) the uncontrollable enemies in SMAC tasks might be ${\rm E}(3)$-symmetric, which might break the symmetries for the overall MG,
and 2) another reason is that the navigation actions are categorical in four Cartesian directions, which limit the actions' expressiveness for ${\rm E}(3)$-symmetries.

\noindent\textbf{Baselines.} 
In MPE, we choose MADDPG as the framework of the algorithms. We consider the classic implementation of both actor and critic by MLPs as the baseline. Another baseline \cite{liu2020pic} achieving state-of-the-art performance in MPE implements the critic by a graph convolutional neural network (GCN) that has no guarantee of being ${\rm E}(3)$-invariant. Our algorithms incorporate ${\rm E}(3)$-invariancy in critic and/or actor by SEGNN-based implementations. We denote the algorithms in the format of [critic\_type, actor\_type], with the baselines being [MLP, MLP] and [GCN, MLP], and our algorithms [SEGNN, SEGNN] and [SEGNN, MLP].
In MuJoCo tasks, we also apply MADDPG as the underlying framework of our algorithms. Similar to the case in MPE, the baselines are denoted as [MLP, MLP] and [GCN, MLP], and ours is denoted as [SEGNN, SEGNN]. For fair comparison, the input graph for the GCN-based critic is the same as that of the SEGNN-based one. In SMAC, the framework of the algorithms is MAPPO, where both actor and critic are commonly implemented by recurrent neural networks (RNNs) to incorporate trajectory-level information. However, group-equivariancy in RNNs needs further technical treatments so here we focus on non-recurrent MLP-based baselines, denoted as MAPPO-[MLP, MLP]. Another popular baseline we consider is QMIX \cite{rashid2020weighted} with recurrency, i.e., the value function is implemented by an RNN. Due to the categorical nature of SMAC's action space, our algorithm only incorporates ${\rm E}(3)$-invariancy in the critic, denoted as MAPPO-[SEGNN, MLP]. 

For fair comparison, all the algorithms have comparable amounts of parameters than those of the baselines, with the details in Appendix \ref{sec:Number of parameters in neural networks}.
Our code is publicly available at
\url{https://github.com/dchen48/E3AC}.

\noindent\textbf{Results overview.}
The MPE results show that the SEGNN-based critic and/or actor outperforms the baselines by a significant margin. Further, since the SEGNN-based architecture can deal with point clouds with an arbitrary number of points, the capability of zero-shot learning and transfer learning is empirically verified in MPE. Moreover, the emergence of invariancy of the baselines is found in all scenarios of MPE. In the MuJoCo tasks, the SEGNN-based architecture performs similarly to the MLP-based baseline in cartpole, consistent with the fact that the symmetries therein are relatively sparse.
In all the other selected tasks with inherently heavier ${\rm E}(3)$-symmetries, SEGNN-based architectures achieve noticeable improvements over the baseline. In SMAC, the experiments show that even the environment only partially satisfies the requirements of ${\rm E}(3)$-symmetric MGs, our SEGNN-based architecture can still be better performed in 6h\_vs\_8z. However, in SMAC, there is no obvious emergence of invariancy in MLP-based architectures, consistent with its limited ${\rm E}(3)$-symmetries. 

\subsection{Results on MPE}
\label{Results on MPE}
\noindent\textbf{Performance.}
In MPE, Figure \ref{fig:Main result} shows the learning curves comparing our SEGNN-based architecture described in Section \ref{sec:E(3)-equivariant multi-agent actor-critic methods} against the baselines of MLP- and GCN-based architecture.
The result clearly illustrates the effectiveness of the SEGNN-based algorithm. Specifically, by only using the SEGNN-based critic, i.e., [SEGNN, MLP] shown in the green curve, the learned architecture has already outperformed the baselines [GCN, MLP] and [MLP, MLP], shown in the blue and orange curves, respectively, in all scenarios in MPE. There is further performance boosting in Push\_N3, Prey\_N3, Navigation\_N6, and Prey\_N6 if we also use SEGNN-based actor, i.e., [SEGNN, SEGNN] shown with the red curves.

\begin{figure}[t]
\begin{center}
\centerline{\includegraphics[width=\columnwidth]{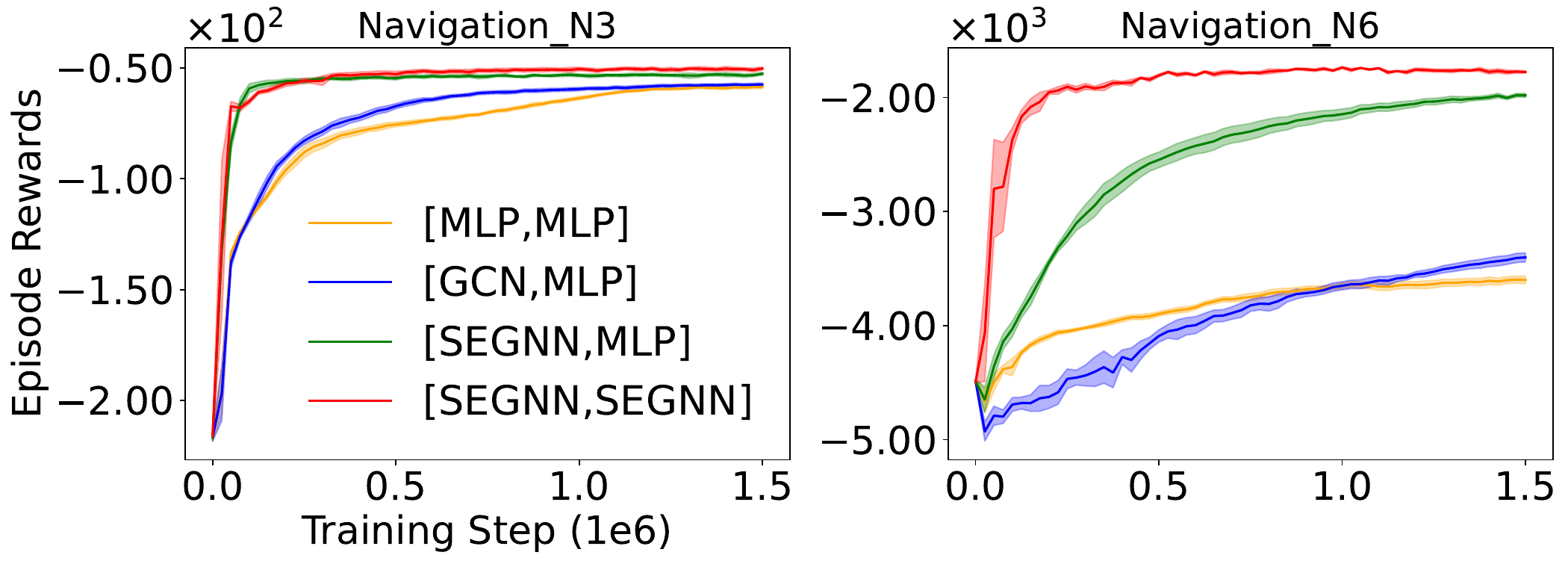}}
\centerline{\includegraphics[width=\columnwidth]{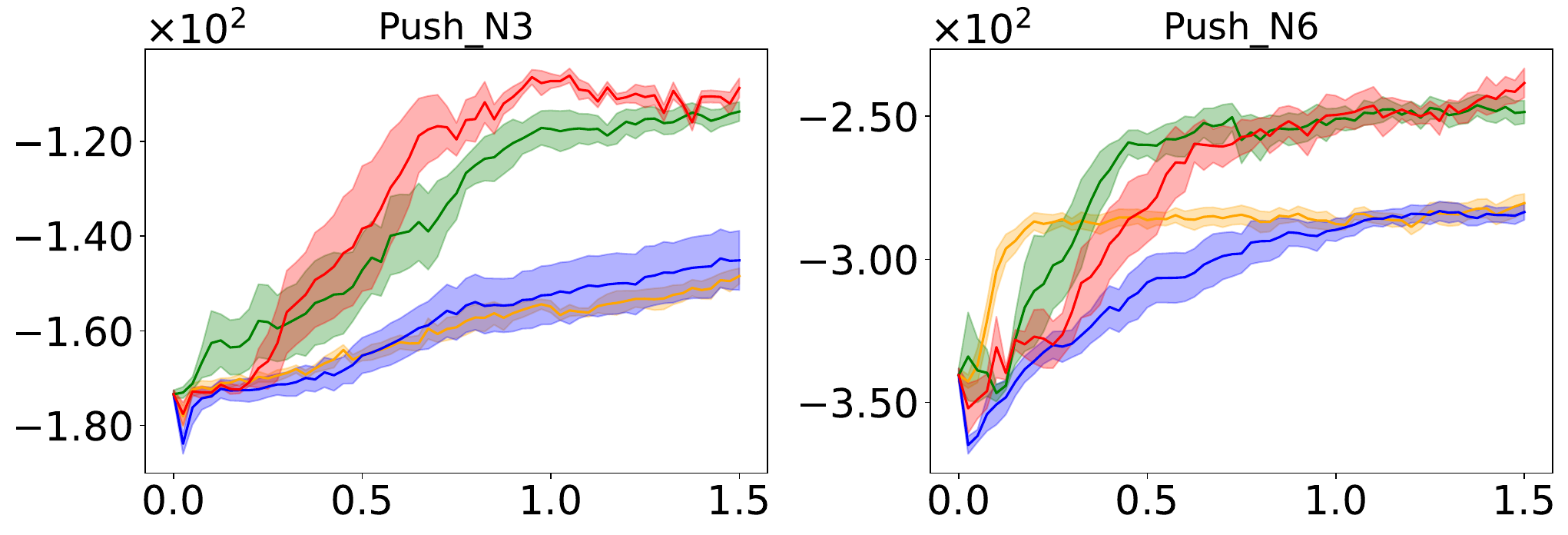}}
\centerline{\includegraphics[width=\columnwidth]{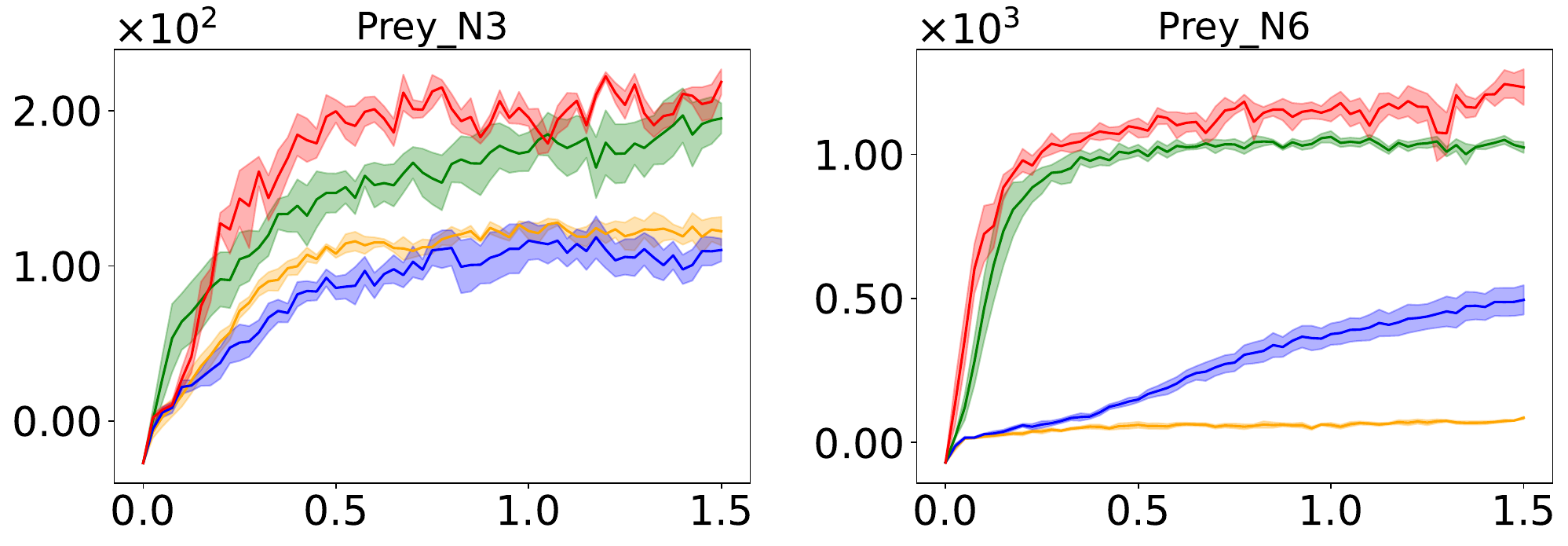}}
\caption{Performance comparison on MPE.}
\label{fig:Main result}
\end{center}
\vskip -0.3in
\end{figure}

\noindent\textbf{Zero-shot/transfer learning.}
\begin{figure}[b]
  \begin{center}
  \includegraphics[width=.7\columnwidth]{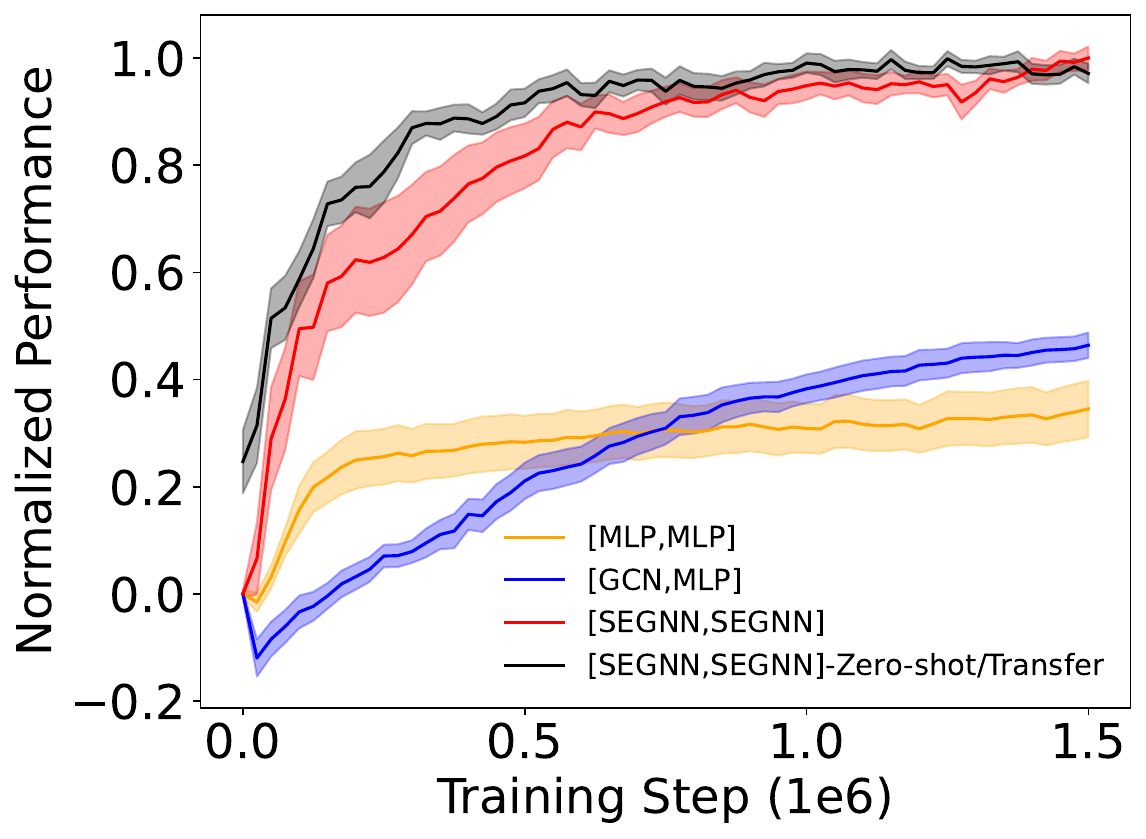}
  \end{center}
  \caption{Performance of zero-shot and transfer learning.}
\label{fig:zero-shot and transfer learning}
\end{figure}
Since SEGNN-based architectures can deal with point clouds of variable numbers of points, it is a natural question if they can perform well in different scenarios with similar setups, i.e., zero-shot learning. 
To answer this question, we test the performance of the SEGNN-based actors learned in 3-agent tasks for the corresponding 6-agent tasks.
The performance is normalized by linear mapping according to the performance of [SEGNN, SEGNN], whose initial performance is set to 0, and the final performance is set to 1. The result is averaged across all seeds and scenarios, plotted in Figure \ref{fig:zero-shot and transfer learning}. At step $0$, the performance of SEGNN-based architecture learned in the 3 agents' scenarios (black curve) is close to the final performance of [MLP, MLP] and that of [GCN, MLP] after training 0.75e6 steps. This empirically proves its good capability of zero-shot learning. Further, we can continue training the SEGNN-based architecture learned in 3 agents' scenarios in the corresponding 6 agents' scenarios to show its capability of transfer learning. Its effectiveness of transfer learning is empirically verified by the transferred architecture's faster convergence rate than the non-transferred counterpart. 

\noindent\textbf{Emergence of Euclidean invariancy.}
The GCN- and MLP-based architectures are by construction neither invariant to rotations nor translations with arbitrary weights. However, they may gradually adapt to have such abilities after training with a large number of data with invariant transformation, as shown in some previous works on data augmentation \cite{perez2017effectiveness} and contrastive learning \cite{chen2020simple}. 
The measurements of Euclidean invariancy are defined similarly to Equation (\ref{eq:rotation_invariancy_measure}), with details in Appendix \ref{sec:invariancy measurements in MPE} for MPE and \ref{sec:invariancy measurements in SMAC} for SMAC.



As illustrated in Figure \ref{fig:MPE-Full-Emergence of equivariancy} in the appendix, the actor trained with a SEGNN-based critic demonstrates the highest increase of rotation- and translation-invariancy, followed by those trained with GCN- and MLP-based critics. In contrast, there is no improvement in rotation- and translation-invariancy for both GCN- and MLP-based critics. This underscores the necessity of a SEGNN-based critic with inherent invariancy.

\subsection{Results on (Multi-Agent) MuJoCo Tasks}
\label{Results on MuJoCo}

\begin{figure}[t]
\begin{center}
\centerline{\includegraphics[width=1\columnwidth]{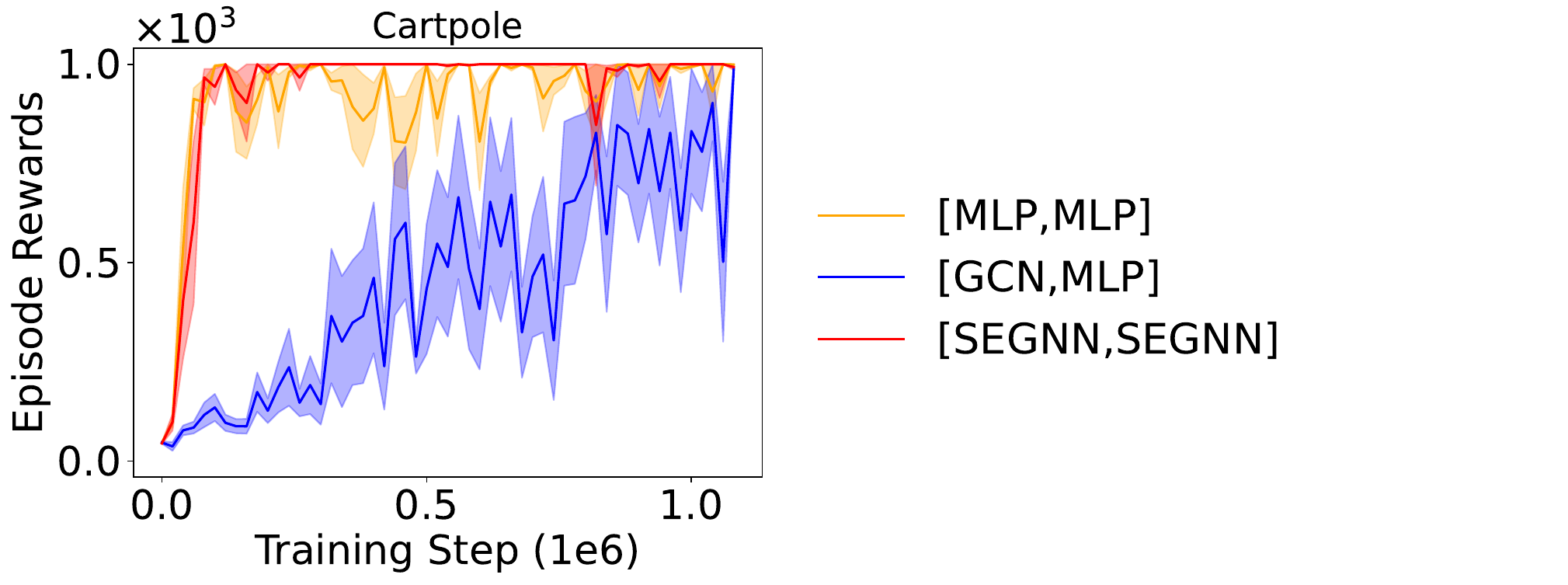}}
\centerline{\includegraphics[width=1\columnwidth]{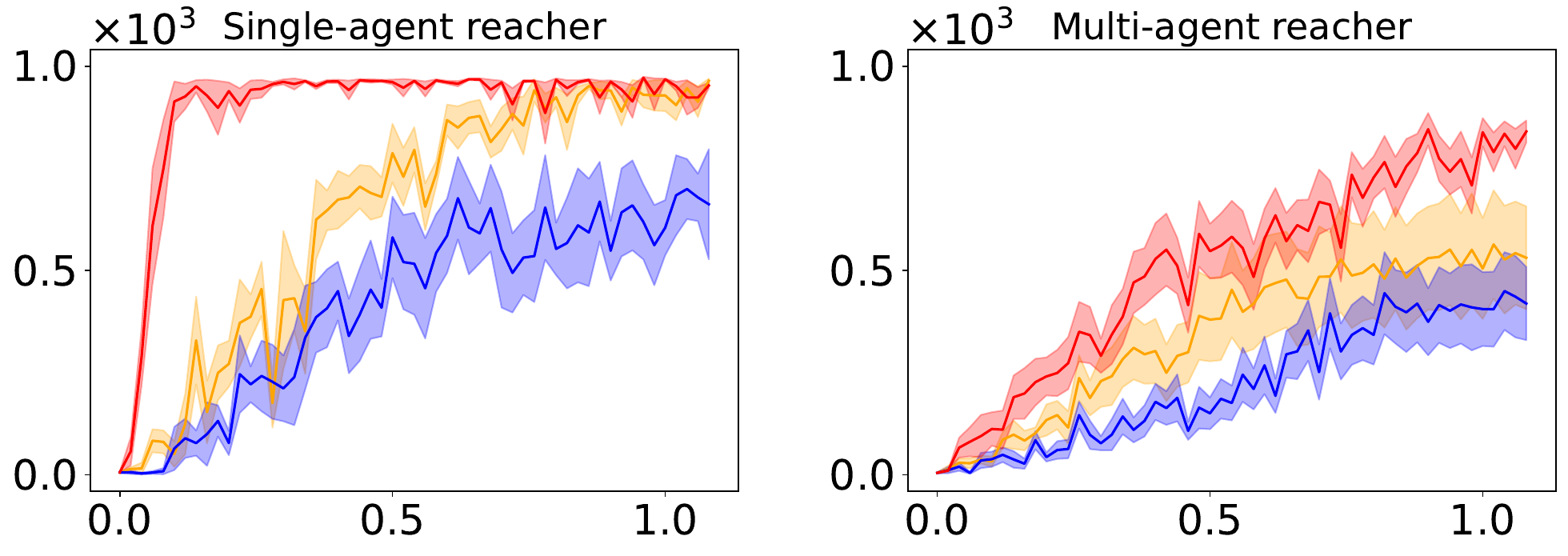}}
\centerline{\includegraphics[width=1\columnwidth]{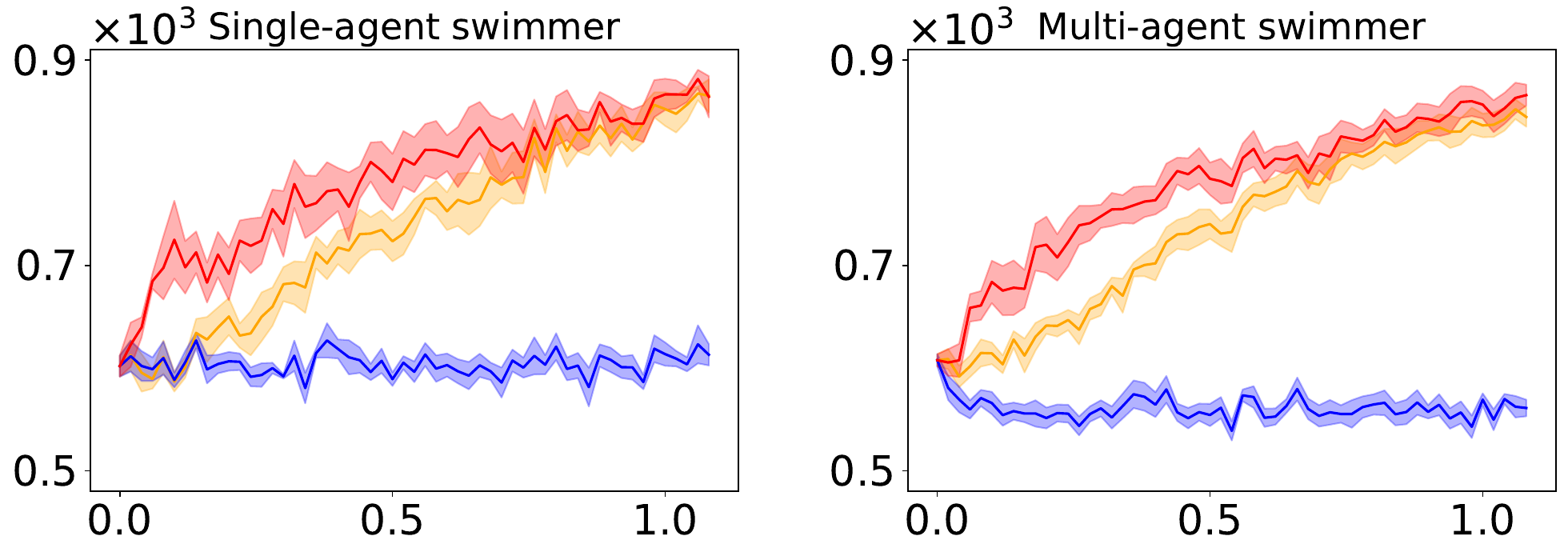}}
\centerline{\includegraphics[width=1\columnwidth]{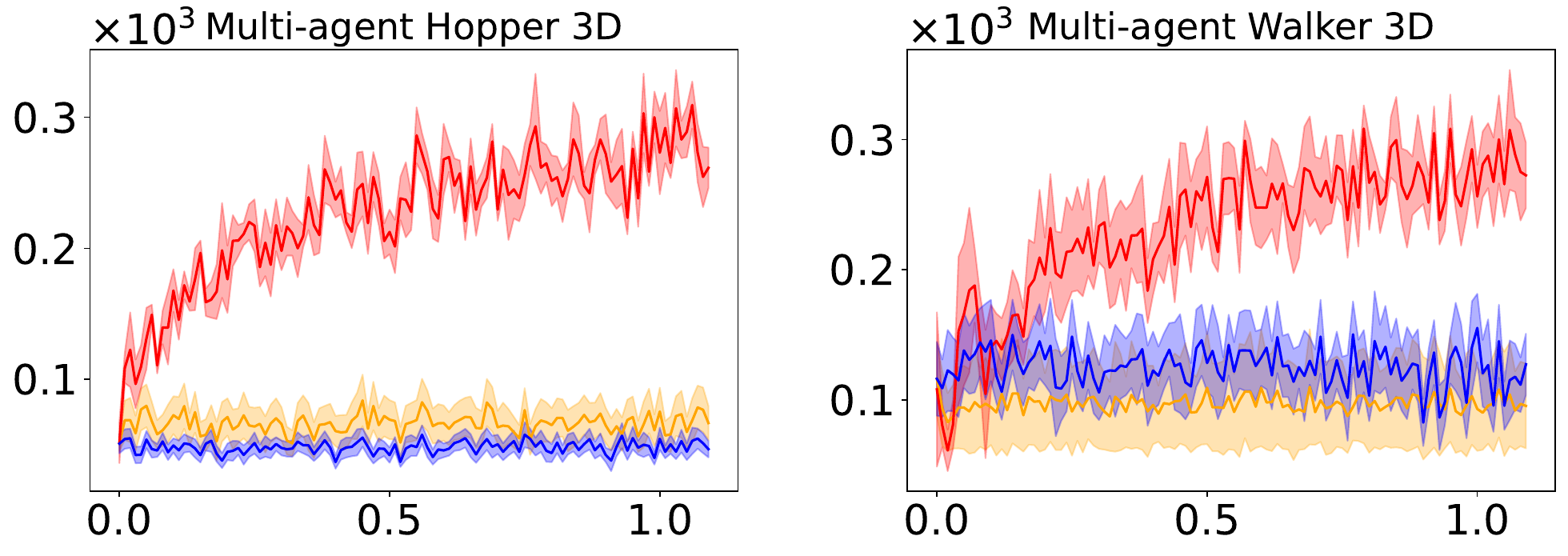}}
\caption{Performance comparison on MuJoCo tasks.}
\label{fig:MuJoCo_results_illustration}
\end{center}
\vskip -0.3in
\end{figure}

\noindent\textbf{Point cloud representations.} {MuJoCo tasks have Euclidean symmetries since they involve 3D robotic control. Here we choose several representative ones: cartpole (balance, sparse), single- and multi-agent reacher (hard), single- and multi-agent swimmer with two links, and 3D variants of multi-agent hopper and walker modified from the single-agent counterparts from \citet{pmlr-v202-chen23i}.
Note our method subsumes the single-agent setting as the special case of $N=1$:
we treat the single-agent tasks just like the multi-agent tasks, except that only one vertex in the point cloud has an action space to be controllable (because it’s single-agent), while all other vertices are non-controllable entities. 
This is contrastively different from standard methods for these single-agent tasks (i.e., [MLP, MLP] baseline) where all state variables are cluttered into a single vector.

The goals and the levels of ${\rm E}(3)$-symmetries of the selected scenarios are described in detail in Appendix \ref{sec:DMCS descriptions}.
The state contains physical attributes in $\R^3$, e.g., Cartesian coordinates, velocities, and angles of the robots' sub-components. The corresponding state-based point cloud in each task is described in detail in Appendix \ref{sec:DMCS point cloud details}.

\noindent\textbf{Performance.} The performance of the algorithms in MuJoCo tasks are shown in Figure \ref{fig:MuJoCo_results_illustration}. The selected tasks have different levels of symmetry, ranging from 2D (cartpole, reacher, and swimmer) to 3D (hopper and walker). The task cartpole (balance, sparse) only has rotation equivariancy of  $\rm{rot}_{180^\circ}$ with respect to the $z$ axis. Due to the lack of symmetries, we do not see much difference for SEGNN-based architectures against the baseline of [MLP, MLP]. The other 2D tasks, single- and multi-agent reacher, and single- and multi-agent swimmer inherently have stronger 2D rotation and translation equivariancy. Compared to the baseline of [MLP, MLP], our method [SEGNN, SEGNN] shows both faster convergence rate and better performance in single- and multi-agent reacher, and a faster convergence rate in single- and multi-agent swimmer. Due to gravity, which always points in the negative direction of the $z$ axis, the 3D tasks, multi-agent hopper and walker, also have 2D rotation and translation equivariancy in the $xy$ plane which is utilized by our method [SEGNN, SEGNN] to have superior performance compared to the baseline of [MLP, MLP]. In all the selected tasks, the baseline of [GCN, MLP] performs the worst, even if it uses the same input graph as the [SEGNN, SEGNN]. This illustrates the benefit of exploiting the inductive bias of Euclidean symmetries to reduce the search space of the neural networks' parameters.

\subsection{Results on SMAC}
\label{Results on SMAC}

\begin{figure}[!ht]
\begin{center}
\includegraphics[width=\columnwidth]{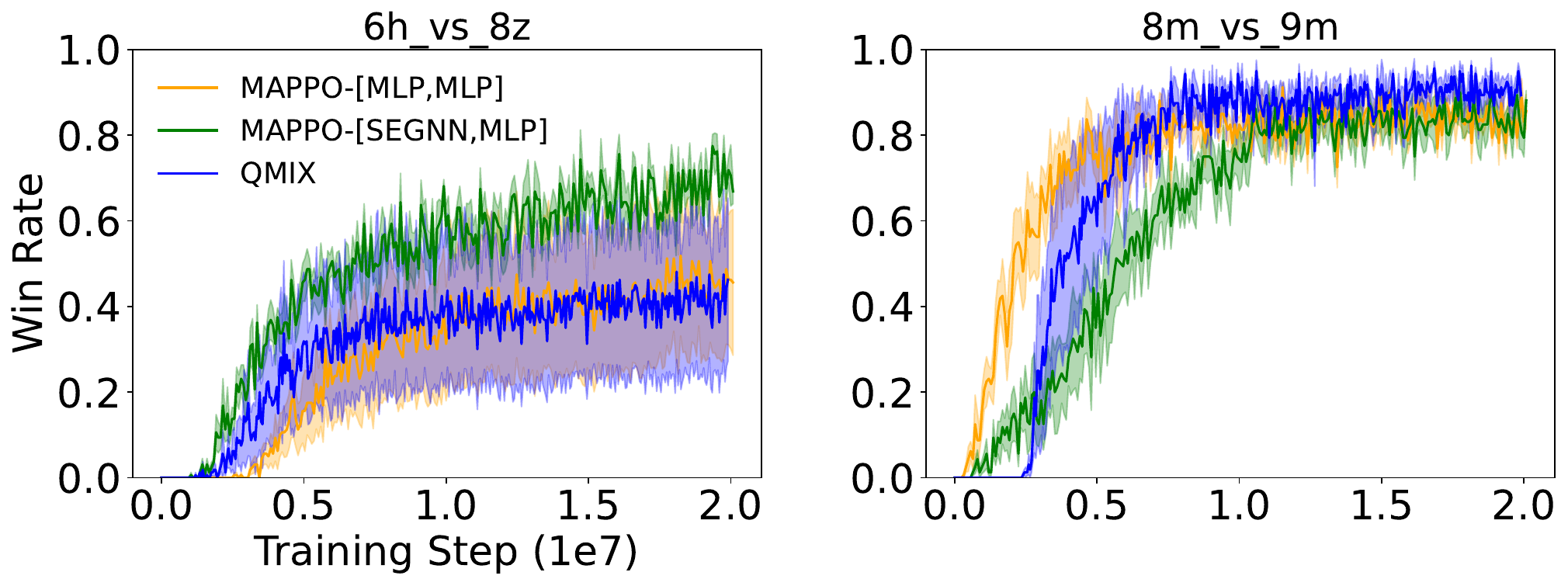}
\caption{Performance comparison on SMAC. 
}
\label{fig:Smac_main_performance}
\end{center}
\end{figure}

\noindent\textbf{Performance.}
The scenarios in SMAC \cite{samvelyan2019starcraft} are competitive in nature, so the invariancy of the game breaks due to the non-invariant behaviors of the uncontrollable enemies. Another possible reason for breaking invariancy is that the categorical actions in SMAC are not Euclidean spaces, as required by ${\rm E}(3)$-symmetric Markov game. Therefore, we may not observe the big increase in performance as the ones shown in MPE, especially in the non-competing ones navigation and push. We indeed see a mixed of performance in Figure \ref{fig:Smac_main_performance} that the SEGNN-based critic helps learn a better-performing actor in 6h\_vs\_8z but not in 8m\_vs\_9m.

\noindent\textbf{Emergence of Euclidean invariancy.}
As illustrated in Figure \ref{fig:SMAC-Full-Emergence of equivariancy} in the appendix, there is no obvious emergence of the rotation-invariancy for both actor and critic. This can be caused by the competitive nature of the scenarios, where we also see a decrease in actors' translation-invariancy in competitive scenarios Prey\_N3 and Prey\_N6. Another possible reason is the categorical nature of the actions, e.g., the move directions \{north, south, east, west\} are only rotation-equivariant to multiples of $90^\circ$ instead of arbitrary angles.

\section{Conclusion}
\label{sec:Conclusion}
We have developed formalisms and methods for exploiting symmetric structures in cooperative multi-agent reinforcement learning, with a focus on 3D Euclidean symmetries.
Specifically, we have formulated the novel notion of group-symmetric Markov games and derived its key properties that admit group-symmetric optimal values and policies. Consistent with these properties, we have discovered the emergence of Euclidean symmetries in vanilla MLP-based multi-agent actor-critic architectures. 
Then, we have developed ${\rm E}(3)$-equivariant message passing actor-critic architectures that specifically suit group-symmetric Markov games, which results in superior sample efficiency and generalization capabilities in most benchmark MARL tasks.

We observe two limitations of this work.
First, the proposed method requires knowledge and annotation of strict symmetries inherent in the multi-agent task, which might not be easily available. This prompts future work of automatic discovery of symmetries, which can even be approximate.
Second, we are restricted to memory-less, non-recurrent architectures in this work even under partial observability.

\section*{Acknowledgements}
Dingyang Chen acknowledges funding support from NSF award IIS-2154904.
Qi Zhang acknowledges funding support from NSF award IIS-2154904 and NSF CAREER award 2237963. Any opinions, findings, conclusions, or recommendations expressed here are those of the authors and do not necessarily reflect the views of the sponsors.
The authors thank the anonymous reviewers for their insightful and constructive reviews.

\section*{Impact Statement}

This paper presents work whose goal is to make the learning process of multi-agent reinforcement learning more efficient. There are many potential societal consequences of our work, none of which we feel must be specifically highlighted here.

\bibliography{example_paper}
\bibliographystyle{icml2024}

\newpage
\appendix
\onecolumn
\section{Proof of Theorem \ref{theorem:Main properties of $G$-symmetric MGs}}
\label{sec:Proof of Theorem theorem:Main properties of $G$-symmetric MGs}

We first prove property \ref{item:G-invariant policy gradient}  in Section \ref{sec:Proof of property item:G-invariant policy gradient} and then properties \ref{item:optimal G-invariant values}, \ref{item:optimal G-invariant policy}, and \ref{item:value equivalence} in Section \ref{sec:Proof of properties}.

\subsection{Proof of Property \ref{item:G-invariant policy gradient}}
\label{sec:Proof of property item:G-invariant policy gradient}
This property can be directly derived by the definition of $G$-invariant MG policies.
For state-based $G$-invariant policy $\pi$:
\begin{align*}
    \nabla_\theta \pi_\theta(a|s)
    = \lim_{\Delta\theta \to 0} \frac{\pi_{\theta+\Delta\theta}(a|s)-\pi_\theta(a|s)}{\Delta}
    = \lim_{\Delta\theta \to 0} \frac{\pi_{\theta+\Delta\theta}(K^s_g[a]~|~L_g[s])-\pi_\theta(K^s_g[a]~|~L_g[s])}{\Delta}
    = \nabla_\theta \pi_\theta(K^s_g[a]~|~L_g[s])
\end{align*}
where the second equality holds because we assume $\pi_\theta$ is $G$-invariant for any $\theta$.
The similar statement holds for observation-based $G$-invariant policy $\nu$.

\subsection{Proof of Properties \ref{item:optimal G-invariant values}, \ref{item:optimal G-invariant policy}, and \ref{item:value equivalence}}
\label{sec:Proof of properties}
These properties can be established by extending the notion of single-agent MDP homomorphism and its properties to MGs.

\begin{definition}[MG homomorphism]
\label{definition:MG homomorphism}
An MG homomorphism $(l, k_s, h_s)$ is a surjective map from an MG $\langle \mathcal{N},\mathcal{S},\mathcal{A}, P, r, o \rangle$ onto an abstract MG $\langle \mathcal{N},\overline{\mathcal{S}},\overline{\mathcal{A}}, \overline{P}, \overline{r}, \overline{o} \rangle$ by surjective maps $l: \mathcal{S}\to\overline{\mathcal{S}}$, $k_s: \mathcal{A}\to\overline{\mathcal{A}} $, and $h_s: \mathcal{O}\to\overline{\mathcal{O}}=\overline{\mathcal{O}}^1\times\cdots\times\overline{\mathcal{O}}^N$, such that
\begin{align}
    & r(s,a) = \overline{r}(l(s),k_s(a)) &\forall s\in\mathcal{S}, a\in\mathcal{A}\label{eq:MG_reward_invariance}\\
    & P([s']_l|s,a) = \overline{P}(l(s')|l(s),k_s(a))  &\forall s,s'\in\mathcal{S}, a\in\mathcal{A}\label{eq:MG_block_transition_equivariance}\\
    & o(l(s)) =  h_s(o(s)) &\forall s\in\mathcal{S}\label{eq:MG_observation_equivariance}.
\end{align}
\end{definition}

\begin{definition}[Policy lifting in MG homomorphisms]
\label{definition:Policy lifting in MG homomorphisms}
We define policy lifting for state-based and observation-based policies separately, assuming full observability: 
\begin{itemize}
    \item State-based MG policy lifting. 
    Let $\pi_\uparrow:\mathcal{S}\to\Delta(\mathcal{A})$ and $\overline{\pi}:\overline{\mathcal{S}}\to\Delta(\overline{\mathcal{A}})$ be two state-based policies in the actual and abstract MGs, respectively.
    We say $\pi_\uparrow$ is a lift of $\overline{\pi}$, denoted as $\pi_\uparrow=\texttt{lift}(\overline{\pi})$, if $\sum_{a\in k_s^{-1}(\overline{a})}\pi_\uparrow(a|s)=\overline{\pi}(\overline{a}|l(s))$ for any $s\in\mathcal{S}$ and $\overline{a}\in\overline{\mathcal{A}}$.
    
    \item Observation-based MG policy lifting.
    Let $\nu_\uparrow = \{\nu_\uparrow^i: \mathcal{O}^i\to\Delta(\mathcal{A}^i)\}_{i\in\mathcal{N}}$ and $\overline{\nu} = \{\overline{\nu}^i:\overline{\mathcal{O}}^i\to\Delta(\overline{\mathcal{A}}^i)\}_{i\in\mathcal{N}}$ be two observation-based policies in the actual and abstract MGs, respectively.
    We say $\nu_\uparrow$ is a lift of $\overline{\nu}$, denoted as $\nu_\uparrow=\texttt{lift}(\overline{\nu})$, if $\sum_{a\in k_s^{-1}(\overline{a})}\nu_\uparrow(a|o(s))=\overline{\nu}(\overline{a}|h_s(o(s)))$ for any $s\in\mathcal{S}$ and $\overline{a}\in\overline{\mathcal{A}}$.
\end{itemize}
\end{definition}

\begin{theorem}[Value equivalence in MG homomorphisms]
\label{theorem:Value equivalence in MG homomorphisms}
Consider an MG homomorphism $(l, k_s, h_s)$ from MG $M=\langle \mathcal{N},\mathcal{S},\mathcal{A}, P, r, o \rangle$ with bijective observation functions to abstract MG $\overline{M} = \langle \mathcal{N},\overline{\mathcal{S}},\overline{\mathcal{A}}, \overline{P}, \overline{r}, \overline{o} \rangle$.
We have 
\begin{align}
\label{eq:optimal value equivalence in MG homomorphisms}
    V_*(s) = \overline{V}_*(l(s)), \quad
    Q_*(s,a) = \overline{Q}_*(l(s), k_s(a))
    \quad\text{for any $s\in\mathcal{S}, a\in\mathcal{A}$.}
\end{align}
For a state-based abstract policy $\overline{\pi}$ with its lift $\pi_\uparrow$, we have 
\begin{align}
\label{eq:value equivalence under policy lifts in MG homomorphisms}
    V_{\pi_\uparrow}(s) = \overline{V}_{\overline{\pi}}(l(s)), \quad
    Q_{\pi_\uparrow}(s,a) = \overline{Q}_{\overline{\pi}}(l(s),k_s(a))
    \quad\text{for any $s\in\mathcal{S}, a\in\mathcal{A}$.}
\end{align}
For an observation-based abstract policy $\overline{\nu}$ with its lift $\nu_\uparrow$, we have 
\begin{align}
\label{eq:value equivalence under observation-based policy lifts in MG homomorphisms}
    V_{\nu_\uparrow}(o(s)) = \overline{V}_{\overline{\nu}}(h_s(o(s))), \quad
    Q_{\nu_\uparrow}(s,a) = \overline{Q}_{\overline{\nu}}(h_s(o(s)),k_s(a))
    \quad\text{for any $s\in\mathcal{S}, a\in\mathcal{A}$.}
\end{align}
\end{theorem}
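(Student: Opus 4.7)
The plan is to prove the three equations in order that exploits the hierarchy: first the value-equivalence-under-policy-lift claim (equation \ref{eq:value equivalence under policy lifts in MG homomorphisms}) for state-based policies, then derive optimal value equivalence (equation \ref{eq:optimal value equivalence in MG homomorphisms}) from it, and finally reduce the observation-based statement (equation \ref{eq:value equivalence under observation-based policy lifts in MG homomorphisms}) to the state-based one using the bijectivity assumption on the observation functions.

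For equation \ref{eq:value equivalence under policy lifts in MG homomorphisms}, define the candidate value $\tilde{V}(s) := \overline{V}_{\overline{\pi}}(l(s))$ and verify it satisfies the Bellman consistency equation for $\pi_\uparrow$ in the original MG. The key algebraic step is to rewrite the one-step expectation over joint actions $a \in \mathcal{A}$ as a sum over abstract actions $\overline{a}\in\overline{\mathcal{A}}$ by grouping $a$'s into preimages $k_s^{-1}(\overline{a})$; the lift condition $\sum_{a\in k_s^{-1}(\overline{a})}\pi_\uparrow(a|s) = \overline{\pi}(\overline{a}|l(s))$ then lets the inner weights factor out. The reward step uses equation \eqref{eq:MG_reward_invariance} directly, while for the transition term I would partition $\mathcal{S}$ into equivalence classes $[s']_l$ and apply equation \eqref{eq:MG_block_transition_equivariance}, so the successor expectation collapses to $\sum_{\overline{s}'}\overline{P}(\overline{s}'|l(s),\overline{a})\tilde{V}(s')$ with $l(s')=\overline{s}'$ (well-defined because $\tilde{V}$ is constant on classes). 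By uniqueness of the fixed point of the policy Bellman operator, $V_{\pi_\uparrow} = \tilde{V}$. The $Q$-version follows by one more application of the Bellman relation $Q_{\pi_\uparrow}(s,a) = r(s,a) + \gamma \sum_{s'} P(s'|s,a) V_{\pi_\uparrow}(s')$ together with equations \eqref{eq:MG_reward_invariance} and \eqref{eq:MG_block_transition_equivariance}.

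For equation \eqref{eq:optimal value equivalence in MG homomorphisms}, I would argue via value iteration. Define $V_0 \equiv 0$ and $\overline{V}_0\equiv 0$, and let $V_n = T^* V_{n-1}$, $\overline{V}_n = \overline{T}^*\overline{V}_{n-1}$ be the optimal Bellman iterates in $M$ and $\overline{M}$. By induction on $n$, show $V_n(s) = \overline{V}_n(l(s))$: the inductive step mimics the calculation above, using that $\max_{a\in\mathcal{A}}$ can be split as $\max_{\overline{a}\in\overline{\mathcal{A}}}\max_{a\in k_s^{-1}(\overline{a})}$ and that the inner maximand depends on $a$ only through $k_s(a)$ once $V_{n-1}(s') = \overline{V}_{n-1}(l(s'))$ is substituted. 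Taking $n\to\infty$ yields the claim for $V_*$, and the one-step Bellman equation for $Q_*$ then gives the $Q$-version.

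Finally, for equation \eqref{eq:value equivalence under observation-based policy lifts in MG homomorphisms}, I would exploit the bijectivity of $o^i$ for every $i\in\mathcal{N}$: each observation-based policy $\nu^i$ corresponds uniquely to a state-based policy $\pi^i(\cdot|s) := \nu^i(\cdot|o^i(s))$, and likewise for abstract policies through $\overline{\nu}$ and $h_s$. One checks that the observation-based lift condition in Definition \ref{definition:Policy lifting in MG homomorphisms} is precisely the state-based lift condition on the induced state policies, using equation \eqref{eq:MG_observation_equivariance}. The claim then follows from equation \eqref{eq:value equivalence under policy lifts in MG homomorphisms}. I expect the main obstacle to be the bookkeeping in the transition-term manipulation — in particular, verifying that summing $P(s'|s,a) \tilde{V}(s')$ over $s'$ correctly reindexes into a sum over abstract successor classes weighted by $\overline{P}$ — but this is a direct consequence of the block-transition equivariance and the fact that $\tilde{V}$ is constant on preimages of $l$.
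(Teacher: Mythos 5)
Your proposal is correct and follows essentially the same route as the paper: the paper's proof simply invokes the standard single-agent MDP-homomorphism argument for the state-based claims and reduces the observation-based claim to the state-based one via bijectivity of $o$, which is exactly the structure you lay out. You supply the details the paper leaves to the cited references (Bellman fixed-point uniqueness for lifted policies, value iteration with the $\max_{a}=\max_{\overline{a}}\max_{a\in k_s^{-1}(\overline{a})}$ split, and the block-transition reindexing), and these details check out.
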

\begin{proof} [Proof of Theorem \ref{theorem:Value equivalence in MG homomorphisms}]
For state-based policies, we use similar proof techniques for single-agent MDP homomorphisms \cite{ravindran2001symmetries,rezaei2022continuous}.
Under bijective observation functions, we can prove \eqref{eq:value equivalence under observation-based policy lifts in MG homomorphisms} by first translating them into state-based policies and then applying \eqref{eq:value equivalence under policy lifts in MG homomorphisms}.
\end{proof}

\noindent\textbf{Proof of properties \ref{item:optimal G-invariant values} and \ref{item:value equivalence}.}
We first establish that a $G$-symmetric MG induces a MG homomorphism, where $G$-invariant MG policy can be viewed as lifted from the induced abstract MG.
Then, properties \ref{item:optimal G-invariant values} and \ref{item:value equivalence} directly follow from Theorem \ref{theorem:Value equivalence in MG homomorphisms}.

Specifically, consider a $G$-symmetric MG as defined in Definition \ref{definition:G-symmetric MG}.
Define maps $(l,k_s,h_s)$ as follows, such that the state-action-observation tuples in the same orbit under $G$ map to the same abstraction:
\begin{align*}
    l(s) = l(L_g[s]), \quad
    k_s(a) = k_{L_g[s]}(K^s_g[a]), \quad
    h_s(o(s)) = h_{L_g[s]}(o(L_g[s])) = h_{L_g[s]}(H^s_g[o(s)])
\end{align*}
for any $s\in\mathcal{S}, a\in\mathcal{A}, g\in G$.
It is easy to verify that maps $(l,k_s,h_s)$ are surjective and satisfy conditions \eqref{eq:MG_reward_invariance}, \eqref{eq:MG_block_transition_equivariance}, and \eqref{eq:MG_observation_equivariance} in Definition \ref{definition:MG homomorphism}, and therefore maps $(l,k_s,h_s)$ induce an MG homomorphism.
Property \ref{item:optimal G-invariant values} then directly follows from \eqref{eq:optimal value equivalence in MG homomorphisms} and the assumption that observation functions are bijective.

Under this an MG homomorphism, a state-based $G$-invariant policy induces an abstract policy, $\pi(a|s) = \pi(K^s_g[a] ~|~ L_g[s]) = \overline{\pi}(\overline{a}|\overline{s})$, where $(\overline{s}, \overline{a}) = (l(s), k_s(a)) = (l(L_g[s]), k_{L_g[s]}(K^s_g[a]))$.
Because $K^s_g[\cdot]$ is reversible via $K^s_{g^{-1}}[\cdot]$, $k_s$ is actually bijective.
To see this, we can pick an arbitrary state-action pair in every state-action orbit as the canonical state-action pair for that orbit,
then the mapping of any state-action pair can be done via two steps, from the state-action pair to the canonical state-action pair and then to the abstraction, with both steps being bijective.
Thus, we have $\sum_{a'\in k_s^{-1}(\overline{a})}\pi(a'|s) = \pi(a|s) =\overline{\pi}(\overline{a}|\overline{s})$, and therefore $\pi$ is a lift policy from $\overline{\pi}$.
Property \ref{item:value equivalence} for $\pi$ then directly follows from \eqref{eq:value equivalence under policy lifts in MG homomorphisms}.
Under bijective observation functions, we can similarly derive property \ref{item:value equivalence} for state-based $G$-invariant policy $\nu$ from \eqref{eq:value equivalence under observation-based policy lifts in MG homomorphisms}.

\noindent\textbf{Proof of property \ref{item:optimal G-invariant policy}.}
Such a state-based policy $\pi$ can be induced from the optimal $G$-invariant value function.
Letting $Q_*$ be the optimal action-value function that is $G$-invariant, the existence of which has been proved as property \ref{item:optimal G-invariant values}, which induces a deterministic optimal state-based policy $\pi_*$ as the greedy policy with respect to $Q_*$, i.e.,  $\pi_*(s) = \argmax_a Q_*(s,a)$ for any $s\in\mathcal{S}$.
Since $Q_*$ is $G$-invariant, it is easy to see that $\pi_*$ is also $G$-invariant, $K^s_g[\pi_*(s)] = \pi_*(L_g[s])$.
Such an observation-based policy $\nu$ can be similarly derived. 

\section{Examples of Group-Symmetric MGs}
\label{sec:Examples of group-symmetric MGs}

\subsection{\texorpdfstring{${\rm E}(3)$}{Lg}-Symmetric MGs}
It is quite common that in the 3D physical world, entities' positions and some other physical quantities are described with respect to an arbitrary reference whose change has no impact on the behaviors of the entities. Such scenarios can be described as an ${\rm E}(3)$-symmetric MGs. Examples include traffic with vehicles, team sports, surveillance with drones, and games such as SMAC and MPE.

Below we highlight three MARL benchmark games to show the generality of ${\rm E}(3)$-symmetric MGs.

\noindent\textbf{Multi-Agent MuJoCo Tasks.}
In a Multi-Agent MuJoCo task, a given single robotic agent
is split into several sub-parts (agents). Each (sub-part) agent can be viewed as a point in the 3D point cloud, with the its centroid as the 3D position. The ${\rm E}(3)$-equivariant features can be velocities, forces, etc, and the ${\rm E}(3)$-intvariant features can be id, unit\_type, etc. The actions are the forces applied to each sub-parts. 

\noindent\textbf{Google Research Football (GRF)}
In GRF,  each unit in the field, such as a player and a ball can be viewed as a point in the 3D point cloud, with ${\rm E}(3)$-equivariant features such as the velocity and ${\rm E}(3)$-invariant features such as the unit\_type. The actions are categorical, with movement actions and some other non-movement actions such as shooting and passing. The non-movement actions are ${\rm E}(3)$-invariant. If the player can move to any direction, i.e., the movement actions are continuous, then the game satisfies the requirement of the action space in \ref{item:action_space}. The scenarios in GRF are competitive where the (uncontrollable) enemies are viewed as part of the environment. The game is an ${\rm E}(3)$-symmetric MGs if the enemies' policies are ${\rm E}(3)$-invariant.

\noindent\textbf{SMAC}
In SMAC, each unit in the combat can be view as a point in the 3D point cloud, with dummy $z$ coordinate. There are some competitive scenarios in SMAC, where the controllable agents are fighting against some (uncontrollable) enemies which are viewed as part of the environment. Therefore, similar to GRF, whether the game is an ${\rm E}(3)$-symmetric MGs or not depends on the ${\rm E}(3)$-invariancy of the uncontrollable enemies' policies.
The actions are also categorical including movement actions and some other ${\rm E}(3)$-invariant non-movement actions such as attacking.

\subsection{\texorpdfstring{${\rm S}_N$}{Lg}-Symmetric MGs}
    Besides the Euclidean transformation, the permutation is another common transformation in the physical world. Intuitively, for homogeneous entities with the same exact behaviors, permute them will have no impact. Such multi-agent symmetries can be captured by the symmetric group ${\rm S}_N$ with permutations as the group actions. The Homogeneous MGs, which is ${\rm S}_N$-symmetric MGs under our definition, is formallly defined by the work \cite{chen2022communicationefficient} with the permutation transformation defined on state, action, observation, and the transition and the reward function. Enforcing permutation-invariancy has been found beneficial by \cite{chen2022communicationefficient,liu2020pic}. Our definition of the ${\rm G}$-symmetric MGs is general enough to cover it. 

The examples include MPE tasks with homogeneous agents, SMAC scenarios with homogeneous ally units, team sports with homogeneous players (assume players have the same capabilities), traffic with homogeneous vehicles, and surveillance with homogeneous drones.

\section{Experiment Details}
\label{sec:Experiment details}
Below we describe in general notation of states and observations in scenarios in MPE and SMAC. 
\subsection{MPE}
Multi-Agent Particle Environment (MPE) with the efficient implementation by \cite{liu2020pic} is a classical benchmark with homogeneous agents for multi-agent reinforcement learning, each of which has  versions with $N=3,6$ agents, respectively.
These MPE environments can be cast as ${\rm E}(3)$-Symmetric MGs, where we have dummy $z$ coordinate. Below we describe the general form of state and observation in all scenarios of MPE, and the corresponding state-action based point cloud and observation-based point cloud which are processed by SEGNN-based critic and actor, respectively, and also the details of each scenarios in MPE. 

\textbf{Notation} We denote the set of agents, landmarks, and preys, which are basic units in scenarios of MPE, as $\mathcal{N}^a=\{1,...,N^a\}$, $\mathcal{N}^l=\{N^a+1,...,N^l\}$, and $\mathcal{N}^p=\{N^a+N^l+1,...,N+N^l+N^p\}$, where ${N^a},{N^l}$, and ${N^p}$ are the number of agents, landmarks and preys, respectively. Here only the agents are controllable so $N^a=N$. 
We have the set of nodes $\mathcal{V}=\mathcal{N}^a\cup\mathcal{N}^l\cup\mathcal{N}^p$.
The absolute positions associated with $\mathcal{N}^a,\mathcal{N}^l$ and $\mathcal{N}^p$ are $\{\bvec{x}_i\}_{i\in \mathcal{N}^a},\{\bvec{x}_v\}_{v\in \mathcal{N}^l}$ and $\{\bvec{x}_v\}_{v\in \mathcal{N}^p}$, respectively. 
The absolute velocities associated with $\mathcal{N}^a$ and $\mathcal{N}^p$ are $V^{a}=\{\bvec{v}_i\}_{i\in \mathcal{N}^a}$ and $V^{p}=\{\bvec{v}_v\}_{v\in \mathcal{N}^p}$, respectively. The landmarks are fixed, so the associate velocities is $V^{l}=\{\bvec{0}\}_{i\in \mathcal{N}^l}$. Let $V=V^a\cup V^l\cup V^p$, which is the set of velocities associated with the set of nodes $\mathcal{V}$.

\textbf{State} The state consists of the positions $\bvec{x}$ of the set of entities $\mathcal{N}$, with the associated features $\bvec{f}=\{\bvec{f}_v=\bvec{v}_v\}_{v\in\mathcal{V}}$. The state is therefore a point cloud $\{(\bvec{x}_v,\bvec{f}_v)\}_{v\in\mathcal{V}}$, which satisfies the requirement \ref{item:state_point_cloud} of ${\rm E}(3)$-symmetric MGs.

\textbf{Observation} The agent $i$'s local observation consists of the relative positions $\bvec{x}^i=\{\bvec{x}_i^i=\bvec{x}_i\}\cup\{\bvec{x}^i_v=\bvec{x}_v-\bvec{x}_i\}_{v\in \mathcal{N}\setminus \{i\}}$ of the set of entities $\mathcal{N}$, with the associated (relative) features $\bvec{f}^i=\{\bvec{f}_i^i=\bvec{v}_i\}\cup\{\bvec{f}_v^i=\bvec{0}\}_{v\in\mathcal{V}\setminus \{i\}}$. (Other agents' velocities are not observable). The agent $i$'s local observation is therefore a point cloud $\{(\bvec{x}_v^i,\bvec{f}_v^i)\}_{v\in\mathcal{V}}$ from agent $i$'s perspective, which satisfies the requirement \ref{item:observation_point_cloud} of ${\rm E}(3)$-symmetric MGs.

\textbf{Action} The local action spaces are absolute velocities, and are therefore Euclidean spaces which satisfies the requirement of \ref{item:action_space}. Denote the set of actions for the controllable agents as $A^a=\{\bvec{a}_i\}_{i\in \mathcal{N}^a}$.

The details of each scenarios in MPE is as the following. 

Cooperative Navigation: The collective goal of agents is to cover all the landmarks. There are two versions with $N^l=3,6$ landmarks for $N= 3, 6$ agents, respectively. There are no preys so $N^p=0$. 

Cooperative Push:
The collective goal of agents is to push a large ball to a target position. There are two versions with $N^l=2,2$ landmarks for $N= 3, 6$ agents, respectively. There are no preys so $N^p=0$.

Predator-and-Prey:
The collective goal of slowly moving agents (predators) is to capture some fast moving preys. The preys are pre-trained and controlled by the environment. There are $N^p=1,2$ preys and $N^l=2,3$ landmarks (blocks) for $N = 3, 6$ agents (predators), respectively.

\subsection{MuJoCo Continuous Control Tasks}
\label{sec:DMCS descriptions}

We choose several representative tasks with different levels of symmetries from the MuJoCo continuous control tasks, including the 2D ones from \cite{tassa2018deepmind} and 3D ones from \cite{pmlr-v202-chen23i} with single- and multi-agent variations: cartpole (balance, sparse), single- and multi-agent reacher (hard), single- and multi-agent swimmer with two links, and multi-agent variants of 3D hopper and walker. The details of the state, observation, and action in the selected scenarios are as the following. 

Cartpole (balance, sparse): the goal of cartpole is to balance a pole connecting to cart moving in the $x$ axis. It has rotation-equivariancy of  $\rm{rot}_{180^\circ}$ with respect to the $z$ axis and no translation-invariancy along the $x$ axis because the agent is tasked to balance the cart and the pole around the origin ($x=0$), not around some arbitrary position. The state contains the position of the cart, pole, and the origin, and the velocity of the cart and the pole. This task is fully observable, so the observation is the same as the state. The action is the 1D force acted on the cart on the $x$ (horizontal) axis. 

Single-agent reacher (hard): the goal of single-agent reacher is to control a robot with two links to reach a target. The second link (root, arm) can move around a hinge fixed at the root (origin), and the first link (hand, finger) can move around a hinge at the end of the second link. This task has rotation-equivariancy in the $x,y$ plane, and no translation-invariancy due to the hinge fixed at the origin. The state contains the positions of the target, finger, hand, arm, and root, and the velocities of the finger, hand, arm, and root. This task is fully observable, so the observation is the same as the state. The action is the torques applied to the two links. The target is generated randomly between circles whose centers are located at the origin with radius 0.05 and 0.2, respectively. 

Multi-agent reacher: the multi-agent reacher has the same goal as the single-agent reacher. The state is the same as the single-agent reacher. This task is partially observable. Agent 1 controlling the the first link can observe its own id, and the positions of the target, finger, and hand, and the velocities of the finger and hand. Agent 2 controlling the second link can observe its own id, and the positions of the target, arm, and root, and the velocities of the arm and root. The actions are the torques applied on the two links for agent 1 and agent 2, respectively.

Single-agent swimmer: the goal of single-agent swimmer is to control a robot with two controllable links to move to a target. It is similar to the single-agent reacher, except that the second link is not fixed at the origin. This task has both rotation- and translation-invariancy in the $xy$ plane. The state is the positions of the target, nose (in the front of the swimmer's head), and the first joint, the velocities and the angular velocities of the swimmer's bodies, and the joint angles. This task is fully observable, so the observation is the same as the state. The actions are the torques applied on the two controllable links. The target is generated randomly inside a square box whose center is located at the origin with width=height=0.3.

Multi-agent swimmer: the multi-agent swimmer has the same goal as the single-agent swimmer. The state is the same as the single-agent swimmer. We split the robot into two agents, with agent 1 controlling the first link, and agent 2 controlling the second link. This task is fully observable, so the observation is the same as the state, except that each agent can also observe their own ids. The actions are the torques applied on the two links for agent 1 and agent 2, respectively. 

Multi-agent hopper 3D (3\_shin): the goal of multi-agent hopper is to control a robot with 3 body parts (torso, thigh, foot) to reach a target position. This task has both rotation- and translation-invariancy in the $xy$ plane. The state contains the positions of the target, torso, thigh, and foot, the velocities of the torso, thigh, foot, the rotation axes of the thigh and foot, and the gravity which is a constant (0,0,-9.81). Agent 1 control the thigh, and agent 2 control the foot. This task is fully observable, so the local observation is the same as the state, except that each agent can also observe their own ids. The actions are the torques applied on thigh for agent 1 and the torques applied on foot for agent 2, respectively.

Multi-agent walker 3D (3\_left\_leg\_right\_foot): the goal of multi-agent walker is to control a robot with 5 body parts (torso, right thigh, right shin, left thigh, left shin) to reach a target position. This task has both rotation- and translation-invariancy in the $xy$ plane. The state contains the positions of the target, torso, right thigh, right shin, left thigh, and left shin, the velocities of the torso, right thigh, right shin, left thigh, and left shin, the rotation axes of the right thigh, right shin, left thigh, and left shin, and the gravity which is a constant (0,0,-9.81). Agent 1 control the right thigh, and agent 2 control the right shin. This task is fully observable, so the local observation is the same as the state, except that each agent can also observe their own ids. The actions are the torques applied on the right thigh for agent 1 and the torques applied on right shin for agent 2, respectively.

\subsection{SMAC}
The StarCraft Multi-Agent Challenge (SMAC) \cite{samvelyan2019starcraft} has become one of the most popular MARL benchmarks. All scenarios in SMAC can be described by ${\rm E}(3)$-symmetric Markov games in a similar way to MPE. We choose the \textit{Hard} scenario 8m\_vs\_9m and \textit{Super Hard} scenario 6h\_vs\_8z to evaluate our proposed algorithm, which has 8 agents and 6 agents, respectively. 

\textbf{Notation} We denote the set of controllable ally agents and uncontrollable enemies, which are basic units in scenarios of SMAC, as $\mathcal{N}^a=\{N+1,...,N^a\}$ and $\mathcal{N}^{e}=\{N^a+1,...,N^a+N^e\}$, where ${N}^a$ and ${N^e}$ are the number of agents and enemies, respectively. Here only agents are controllable and so $N^a=N$.
We have the set of nodes $\mathcal{V}=\mathcal{N}^a\cup\mathcal{N}^e$. The absolute positions associated with $\mathcal{N}^a$ and $\mathcal{N}^{e}$ are $\{\bvec{x}_i\}_{i\in \mathcal{N}^a}$ and $\{\bvec{x}_v\}_{v\in \mathcal{N}^e}$, respectively.

The local action spaces of the controllable ally agents are categorical, which is the union of the set of move directions, attack enemies, stop and no operation, i.e.,  $\mathcal{A}^i=\{\text{north, south, east, west}\}\cup \{\text{attack[enemy\_id]}\}_{\text{enemy\_id}\in \mathcal{N}^{e}} \cup \{\text{stop, no-op}\}$. The health associated with $\mathcal{N}^a$ and $\mathcal{N}^e$ are $H^{a}=\{\text{h}_i\}_{i\in \mathcal{N}}$ and $H^{e}=\{\text{h}_v\}_{v\in \mathcal{N}^{e}}$, respectively. The cooldown of the weapons associated with $\mathcal{N}^a$ and $\mathcal{N}^e$ which is not applicable are $CD^{a}=\{\text{cd}_i\}_{i\in \mathcal{N}}$ and $CD^{e}=\{\textbf{0}\}_{v\in \mathcal{N}^e}$, respectively. The shield associated with $\mathcal{N}^a$ and $\mathcal{N}^e$ are $SH^{a}=\{\text{sh}_i\}_{i\in \mathcal{N}}$ and $SH^e=\{\text{sh}_v\}_{v\in \mathcal{N}^e}$, respectively. Let $H=H^a\cup H^e$, $CD=CD^a\cup CD^e$, $SH=SH^a\cup SH^e$,  which are the sets of health, cooldown, and shield of the allies' and enemies' agents associated with the set of nodes $\mathcal{V}$.

\textbf{State} The state consists of the positions $\bvec{x}$ of the set of entities $\mathcal{N}$, with the associated features $\bvec{f}=\{\bvec{f}_v=(\text{h}_v,\text{cd}_v,\text{sh}_v)\}_{v\in\mathcal{V}}$. The state is therefore a point cloud $\{(\bvec{x}_v,\bvec{f}_v)\}_{v\in\mathcal{V}}$, which satisfies the requirement \ref{item:state_point_cloud} of ${\rm E}(3)$-symmetric MGs.

\textbf{Observation} Agent $i$'s observation $o^i(s)$ only contains information of itself and some other observable entities, denoted as $\mathcal{V}^i
\subseteq \mathcal{V}$. The information of the non-observable entities are padded as $\bvec{0}$. From agent $i$'s perspective, denote the visibility of the entities as $VI^i=\{\text{vi}_v=1\}_{v\in \mathcal{V}^i}\cup\{\text{vi}_v=0\}_{v\in \mathcal{V}\setminus\mathcal{V}^i}$, and denote the set of (one-hot) move directions $\in\{\text{north, south, east, west}\}$ as $MD^i=\{\text{md}_i\}\cup\{\bvec{0}\}_{v\in \mathcal{V}\setminus\{i\}}$. (only its own move direction is observable). Then, $o^i(s)$ consists of the relative positions $\bvec{x}^i=\{\bvec{x}_i^i=\bvec{0}\}\cup\{\bvec{x}^i_v=\bvec{x}_v-\bvec{x}_i\}_{v\in \mathcal{N}\setminus \{i\}}$ of the set of entities $\mathcal{N}$, with the associated (relative) features $\bvec{f}^i=\{f_i^i=(\text{agent\_id}=i,\text{md}_i,\text{vi}_i=1,\|\bvec{x}_i^i\|=0,\bvec{x}_i^i=\bvec{0},\text{h}_i,\text{sh}_i\}\cup\{f_v^i=(\text{agent\_id}=0,\text{md}_v=\bvec{0},\text{vi}_v,\|\bvec{x}_v^i\|,\bvec{x}_v^i,\text{h}_v,\text{sh}_v)\}_{v\in\mathcal{V}\setminus \{i\}}$. (Other entities' ids are not observable, so padded as 0). The agent $i$'s local observation is therefore a point cloud $\{(\bvec{x}_v^i,\bvec{f}_v^i)\}_{v\in\mathcal{V}}$ from agent $i$'s perspective, which satisfies the requirement \ref{item:observation_point_cloud} of ${\rm E}(3)$-symmetric MGs.

\textbf{Action} Each agent $i$'s action space is categorical, which is the union of the sets of move directions, attack enemies, stop and no operation, i.e., $\mathcal{A}^i=\{\text{north, south, east, west}\}\cup \{\text{attack[enemy\_id]}\}_{\text{enemy\_id}\in \mathcal{N}^{e}} \cup \{\text{stop, no-op}\}$. Obviously, $\mathcal{A}^i$ is not a Euclidean space and therefore does not satisfy the requirement of \ref{item:action_space}.

The details of the two scenarios in SMAC we test our algorithm is as the following.

8m\_vs\_9m: 8 controllable ally agents of type Marines are fighting against 9 uncontrollable enemies of type Marines. Shield is not applicable for Marines so both ${SH}^a$ and ${SH}^e$ are $\emptyset$.

6h\_vs\_8z: 6 controllable ally agents of type Hydralisks are fighting against 9 uncontrollable enemies of type Zealots. Shield is not applicable for Hydralisks so ${SH}^a=\emptyset$.

\section{Implementation Details}
\label{sec:Implementation details}
As described in \ref{subsec:E3-equivariant message passing}, the input to SEGNN is an input Euclidean graph is represented as a tuple $G^{\rm in}=(\mathcal{V}, \mathcal{E}, \bvec{x}, \bvec{f}^{\rm in})$ where
$\mathcal{V}$ is a set of vertices with 3D positions $\bvec{x}_v\in\R^3$, 
$\mathcal{E}\subseteq \mathcal{V} \times \mathcal{V}$ is a set of edges,
and $\bvec{f}^{\rm in}$ is a set of feature vectors, each associated with a vertex or an edge. Precisely, the input feature $\bvec{f}^{\rm in}$ can be further decomposed into node feature, node attribute, and edge attribute, i.e., $\bvec{f}^{\rm in}=(\bvec{f}^{\rm node}_{\rm feature},\bvec{f}^{\rm node}_{\rm attribute},\bvec{f}^{\rm edge}_{\rm attribute})$. Note that in MPE and SMAC, entities are in a 2D world, so we pad the dummy $z$ coordinate of $0$ for them to be in a 3D space whenever applicable, e.g. velocities, positions, etc.

Below we describe in details, whichever applicable, how to represent state $s$, state-action pair $(s,a)$, and observation $o^i(s)$ as Euclidean graphs for centralized state-value critic $V(s)$, action-value critic $Q(s,a)$, and observation-based actor $\nu^i(o^i(s))$, respectively, in MPE, MuJoCo tasks, and SMAC.

\subsection{MPE}
In MPE, we build our architecture based on MADDPG, where the critic is based on 
state and action pair and therefore we construct state-action based Euclidean graph processed by ${\rm E}(3)$-invariant critic. We also construct an observation-based Euclidean graph processed by an ${\rm E}(3)$-equivariant actor. 

\textbf{State-action based Euclidean graph}
Based on the information in the state $s=\{(\bvec{x}_v,\bvec{f}_v)\}_{v\in\mathcal{V}}$, $\mathcal{V}$ and $\bvec{x}$ have already been well-defined. The graph $\mathcal{E}$ is densely connected, i.e., $\mathcal{E}=\{(v_1,v_2):v_1\in \mathcal{V},v_2\in \mathcal{V},v_1\neq v_2\}$. We can pad the dummy $\bvec{0}$ action for non-controllable entities, i.e., landmarks and preys, to have actions for all entities: $A=\{\bvec{a}_i\}_{i\in \mathcal{N}^a}\cup\{\bvec{0}\}_{v\in \mathcal{N}^l}\cup\{\bvec{0}\}_{v\in \mathcal{N}^p}$. We can then design the node feature $\bvec{f}^{\rm node}_{\rm feature}$, node attribute $\bvec{f}^{\rm node}_{\rm attribute}$, and edge attribute $\bvec{f}^{\rm edge}_{\rm attribute}$ based on the information in the input features $\bvec{f}^{\rm in}=\bvec{f}=\{\bvec{v}_v\}_{v\in\mathcal{V}}$ and actions $A$. The node feature of each node consists of its own absolute velocity, the $l_2$ norm of its own absolute velocity, its own action, and $l_2$ norm of its own action. Specifically, $\bvec{f}^{\rm node}_{\rm feature}=\{[\bvec{v}_v,\norm{\bvec{v}_v}, \bvec{a}_v,\norm{\bvec{a}_v}]\}_{v\in\mathcal{V}}$. We can specify the ${\rm E}(3)$-equivariancy of both absolute velocities and actions, and ${\rm E}(3)$-invariancy of the associated $l_2$ norms in SEGNN to preserve physical constraints.  The node attribute is the set of node\_types $\in\{\text{agent, landmark, prey}\}$, i.e., $\bvec{f}^{\rm node}_{\rm attribute}=\{[\text{node\_type}(v)]\}_{v\in\mathcal{V}}$. The node types are ${\rm E}(3)$-invariant. The edge attribute is the set of relative positions, i.e., $\bvec{f}^{\rm edge}_{\rm attribute}=\{[\bvec{x}_{v_1}-\bvec{x}_{v_2}]\}_{(v_1,v_2)\in\mathcal{E}}$, which is ${\rm E}(3)$-equivariant. 

\begin{figure}[t]
\begin{center}
\includegraphics[width=\textwidth]{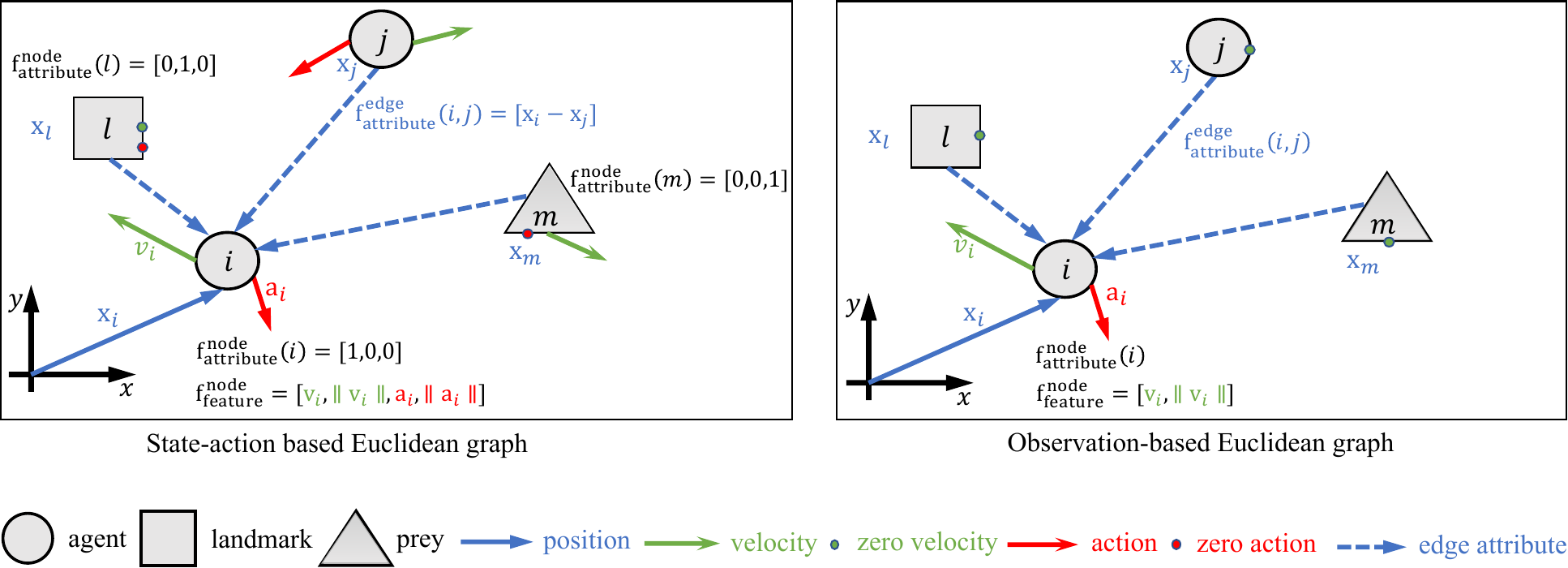}
\caption{Illustration of state-action based point cloud and observation-based point cloud for a scenario in MPE with agent, landmark and prey.}
\label{fig:Illustration of E3-symmetric markov games}
\end{center}
\end{figure}

\textbf{Observation-based Euclidean graph}
The relative positions in the observation can be restored to the absolute location, i.e., $\bvec{x}^i=\{\bvec{x}_i^i=\bvec{x}_i\}\cup\{\bvec{x}_v-\bvec{x}_i\}_{v\in \mathcal{N}\setminus \{i\}}\longrightarrow \{\bvec{x}_i^i=\bvec{x}_i\}\cup\{\bvec{x}_v-\bvec{x}_i+\bvec{x}_i\}_{v\in \mathcal{N}\setminus \{i\}}$. Then we can generate the observation-based point cloud in almost the same way, with two differences. The first difference is that due to partial observability, the velocities of other agents and preys are not available, so we have $V^a=\{\bvec{v}_i\}\cup \{\bvec{0}\}_{j\in \mathcal{N}\setminus \{i\}}$ and $V^p=\{\bvec{0}\}_{\mathcal{N}^p}$. Another difference is that the observation-based point cloud does not condition on actions, so correspondingly the node features do not contain action related elements and is only velocity related, i.e., $\bvec{f}^{\rm node}_{\rm feature}=\{[\bvec{v}_v,\norm{\bvec{v}_v}]\}_{v\in\mathcal{V}}$. 
\subsection{MuJoCo Continuous Control Tasks}
\label{sec:DMCS point cloud details}
In MuJoCo continuous control tasks, we build our architecture based on MADDPG, where the critic is based on 
state and action pair and therefore we construct state-action based Euclidean graph processed by ${\rm E}(3)$-invariant critic. We also construct an observation-based Euclidean graph processed by an ${\rm E}(3)$-equivariant actor. Note that the input to both our architecture and the MLP-based baselines contains the same information. Below we specify the state-action based Euclidean graph and the observation-based Euclidean graph for all the selected tasks. For all the point sets defined below, the graph $\mathcal{E}$ is densely connected, i.e., $\mathcal{E}=\{(v_1,v_2):v_1\in \mathcal{V},v_2\in \mathcal{V},v_1\neq v_2\}$. The node attribute is the set of node\_types, where each point in the point set has its own node type in all the selected MuJoCo tasks. The node types are ${\rm E}(3)$-invariant. The edge attribute is the set of relative positions, i.e., $\bvec{f}^{\rm edge}_{\rm attribute}=\{[\bvec{x}_{v_1}-\bvec{x}_{v_2}]\}_{(v_1,v_2)\in\mathcal{E}}$, which is ${\rm E}(3)$-equivariant.

\textbf{cartpole (Balance Sparse)} State-action based Euclidean graph: the point set $\mathcal{V}=\{\rm origin, cart\}$. For $\rm point_{cart}$, the node feature $\bvec{f}^{\rm node, cart}_{\rm feature}=\{[\bvec{v}_j,\norm{\bvec{v}_j}]\}_{j\in \{\rm cart, pole\}}\cup \{[\bvec{x}_{\rm pole},\norm{\bvec{x}_{\rm pole}}]\}\cup \{\bvec{a},\norm{\bvec{a}}\}$. For $\rm point_{origin}$, the node feature $\bvec{f}^{\rm node, origin}_{\rm feature}=\{\bvec{0}\}$. The node attribute is the node\_type: $\forall i\in\mathcal{V},\bvec{f}^{\rm node}_{\rm attribute}=[\text{node\_type}(i)]$. Observation based Euclidean graph: it is the same as the state-action based Euclidean graph, except that action is not included in the node feature.

\textbf{Single-agent reacher (hard)} State-action based Euclidean graph: the point set $\mathcal{V}=\{\rm target, finger\}$. For $\rm point_{finger}$, the node feature $\bvec{f}^{\rm node, finger}_{\rm feature}=\{[\bvec{v}_j,\norm{\bvec{v}_j}]\}_{j\in \{\rm finger, hand, arm, root\}}\cup \{[\bvec{x}_j,\norm{\bvec{x}_j}]\}_{j\in \{\rm hand, arm, root\}}\}\cup \{\bvec{a},\norm{\bvec{a}}\}$. For $\rm point_{target}$, the node feature $\bvec{f}^{\rm node, target}_{\rm feature}=\{\bvec{0}\}$. The node attribute is the node\_type: $\forall i\in\mathcal{V},\bvec{f}^{\rm node}_{\rm attribute}=[\text{node\_type}(i)]$. Observation based Euclidean graph: it is the same as the state-action based Euclidean graph, except that action is not included in the node feature.

\textbf{Multi-agent reacher (hard)} The state-action based Euclidean graph is the same as the single-agent reacher. For agent 1 controlling the green part in reacher in Figure \ref{fig:MuJoCo_results_illustration}, the point set is $\mathcal{V}=\{\rm target, finger\}$. The node feature $\bvec{f}^{\rm node, finger}_{\rm feature}=\{\rm id_{agent\_1}\}\cup\{[\bvec{v}_j,\norm{\bvec{v}_j}]\}_{j\in \{\rm finger, hand\}}\cup \{[\bvec{x}_j,\norm{\bvec{x}_{\rm hand}}]\}\}\cup \{\bvec{a}_{1},\norm{\bvec{a}_{1}}\}$. For $\rm point_{target}$, the node feature $\bvec{f}^{\rm node, target}_{\rm feature}=\{\bvec{0}\}$. For agent 2 controlling the yellow part in reacher in Figure \ref{fig:MuJoCo_results_illustration}, the point set is $v=\{\rm target, arm\}$. The node feature $\bvec{f}^{\rm node, arm}_{\rm feature}=\{\rm id_{agent\_2}\}\cup\{[\bvec{v}_j,\norm{\bvec{v}_j}]\}_{j\in \{\rm arm, root\}}\cup \{[\bvec{x}_j,\norm{\bvec{x}_{\rm root}}]\}\}\cup \{\bvec{a}_{2},\norm{\bvec{a}_{2}}\}$. For $\rm point_{target}$, the node feature $\bvec{f}^{\rm node, target}_{\rm feature}=\{\bvec{0}\}$. 

\textbf{Single-agent swimmer} State-action based Euclidean graph: the point set $\mathcal{V}=\{\rm target, nose, first\_joint\}$. For $\rm point_{nose}$, the node feature $\bvec{f}^{\rm node, nose}_{\rm feature}=\{[\bvec{v}_j,\norm{\bvec{v}_j}, \bvec{\omega}_j,\norm{\bvec{\omega}_j}]\}_{j\in \{\rm bodies\}}\cup \{[\theta_j]\}_{j\in \{\rm joints\}}\cup \{\bvec{a},\norm{\bvec{a}}\}$. For $\rm point_{target}$ and $\rm point_{first\_link}$, the node features are $\{\bvec{0}\}$. The node attribute is the node\_type: $\forall i\in\mathcal{V},\bvec{f}^{\rm node}_{\rm attribute}=[\text{node\_type}(i)]$. Observation based Euclidean graph: it is the same as the state-action based Euclidean graph, except that action is not included in the node feature.

\textbf{Multi-agent swimmer} The state-action based Euclidean graph is the same as the single-agent swimmer. The observation-based Euclidean graphs for both agents are the same as the state-action based Euclidean graph, except that action is not included in the node feature and their own ids are included in the node feature which are ${\rm E}(3)$-invariant.

\textbf{Multi-agent hopper} State-action based Euclidean graph: the point set $\mathcal{V}=\{\rm target, torso, thigh, foot\}$. For the for noncontrollable components target and torso, we set the dummy action of $\{\bvec{0}\}$. For $i \in \mathcal{V}$, the node feature $\bvec{f}^{\rm node, i}_{\rm feature}=[\bvec{v}_i,\norm{\bvec{v}_i}, \bvec{x}_i,\norm{\bvec{x}_i},\rm ,\rm rotation\_axis\_i,\norm{rotation\_axis\_i},\bvec{a}_i,\norm{\bvec{a}_i}]$. The node attribute is the node\_type: $\forall i\in\mathcal{V},\bvec{f}^{\rm node}_{\rm attribute}=[\text{node\_type}(i)]$. Observation based Euclidean graph: it is the same as the state-action based Euclidean graph, except that action is not included in the node feature and their own ids are included in the node feature which are ${\rm E}(3)$-invariant.

\textbf{Multi-agent walker} State-action based Euclidean graph: the point set $\mathcal{V}=$ \{\rm target, torso, right thigh, right shin, left thigh, left shin\}. For the for noncontrollable components \{target, torso, left thigh, left shin\}, we set the dummy action of $\{\bvec{0}\}$. For $i \in \mathcal{V}$, the node feature $\bvec{f}^{\rm node, i}_{\rm feature}=[\bvec{v}_i,\norm{\bvec{v}_i}, \bvec{x}_i,\norm{\bvec{x}_i},\rm ,\rm rotation\_axis\_i,\norm{rotation\_axis\_i},\bvec{a}_i,\norm{\bvec{a}_i}]$. The node attribute is the node\_type: $\forall i\in\mathcal{V},\bvec{f}^{\rm node}_{\rm attribute}=[\text{node\_type}(i)]$. Observation based Euclidean graph: it is the same as the state-action based Euclidean graph, except that action is not included in the node feature and their own ids are included in the node feature which are ${\rm E}(3)$-invariant.

\subsection{SMAC}
In SMAC, we build our architecture based on MAPPO, where the critic is based on state only and therefore we construct state based Euclidean graph processed by ${\rm E}(3)$-invariant critic. The action space in SMAC is categorical and therefore we use a traditional MLP-based actor to process the observation.

\textbf{State based Euclidean graph} 
Based on the information in the state $s=\{(\bvec{x}_v,\bvec{f}_v)\}_{v\in\mathcal{V}}$, $\mathcal{V}$ and $\bvec{x}$ have already been well-defined. The graph $\mathcal{E}$ is a nearest neighbor graph degree $k$, i.e., $\mathcal{E}=\{(v_1,v_2):v_1\in \mathcal{V},v_2\in \mathcal{V},v_1\in \text{neighbor}(v_2,k)\}$. We can then design the node feature $\bvec{f}^{\rm node}_{\rm feature}$, node attribute $\bvec{f}^{\rm node}_{\rm attribute}$, and edge attribute $\bvec{f}^{\rm edge}_{\rm attribute}$ based on the information in the input features $\bvec{f}^{\rm in}=\bvec{f}=\{\bvec{v}_v\}_{v\in\mathcal{V}}$. The node feature of each node consists of its health, cooldown, and shield. Specifically, $\bvec{f}^{\rm node}_{\rm feature}=\{[\text{h}_v,\text{cd}_v,\text{sh}_v]\}_{v\in\mathcal{V}}$. All quantities in node features are ${\rm E}(3)$-invariant. The node attribute is the set of node\_types $\in\{\text{[Team,Type]}\}_{\text{Team}\in\{\text{ally, enemy}\},\text{Type}\in\{\text{Marines, Hydralisks, Zealots}\}}$, i.e., $\bvec{f}^{\rm node}_{\rm attribute}=\{[\text{node\_type}(v)]\}_{v\in\mathcal{V}}$. The node types are ${\rm E}(3)$-invariant. The edge attribute is the set of relative positions, i.e., $\bvec{f}^{\rm edge}_{\rm attribute}=\{[\bvec{x}_{v_1}-\bvec{x}_{v_2}]\}_{(v_1,v_2)\in\mathcal{E}}$, which is ${\rm E}(3)$-equivariant.

\newpage\clearpage
\section{Supplementary Results}
\label{sec:Supplementary results}
\subsection{Invariancy Measure in MPE}
\label{sec:invariancy measurements in MPE}
In MPE, the metrics for measuring invariancy for actor and critic in terms of rotation and translation are defined in the following:

\begin{align}
\label{criti-rotation}
\textstyle
    {\rm{Invariancy_{Q}^{rot}}}=-\frac{1}{|A|}\sum_{\alpha\in A}
    |Q(s,a)-Q({\rm rot}_\alpha[s],{\rm rot}_\alpha[a])|
\end{align}

\begin{align}
\label{critic-translation}
\textstyle
    {\rm{Invariancy_{Q}^{transl}}}=-\frac{1}{|L|}\sum_{l\in L}
    |Q(s,a)-Q({\rm transl}_l[s],{\rm transl}_l[a])|
\end{align}

\begin{align}
\label{actor-rotation}
\textstyle
    {\rm{Invariancy_{\nu}^{rot}}} = \frac{1}{|A|N}\sum_{i\in\mathcal{N}}\sum_{\alpha\in A}
    \cos\left(
    {\rm rot}_\alpha[\nu^i(o^i(s)],~
    \nu^i(o^i({\rm rot}_\alpha[s]))
    \right)
\end{align}

\begin{align}
\label{actor-translation}
\textstyle
    {\rm{Invariancy_{\nu}^{transl}}} = \frac{1}{|L|N}\sum_{i\in\mathcal{N}}\sum_{l\in L}
    \cos\left(
    {\rm transl}_l[\nu^i(o^i(s)],~
    \nu^i(o^i({\rm transl}_l[s]))
    \right)
\end{align}

,where ${\rm rot}_\alpha[\cdot]$ and ${\rm transl}_l[\cdot]$ performs the rotation by $\alpha$ and translation by $l$, respectively, and $\cos(\cdot,\cdot)$ measures the cosine similarity. $A$ is the list of angles which we use $[30^\circ,60^\circ\cdots,330^\circ]$, $L$ is the list of translations which we use $[(-l_x,0),(-0.5l_x,0),(0.5l_x,0),(l_x,0)]$ for translations along $x$ axis and $[(0,-l_y),(0,-0.5l_y),(0,0.5l_y),(0,l_y)]$ for translations along $y$ axis (the size of the map is $l_x$ by $l_y$, where $l_x=l_y=1,1.5,1$ for the scenarios of push, navigation, and prey, respectively). We calculate these four metrics by averaging over state and observation collected in different time steps in $200$ episodes. The ranges of those metrics are not important. By construction, the larger the values of ${\rm{Invariancy_{Q}^{rot}}},{\rm{Invariancy_{Q}^{transl}}},{\rm{Invariancy_{\nu}^{rot}}},$
and ${\rm{Invariancy_{\nu}^{transl}}}$ are, the closer the behaviors the non-SEGNN-based architecture are compared to those of the SEGNN-based architecture, which has the largest possible values for those metrics, in terms of their invariancy to the corresponding transformations.

\newpage\clearpage
\subsection{The Emergence of Invariancy in MPE.}
\label{sec:The emergence of invariancy and equivariancy in MPE}
As illustrated in Figure \ref{fig:MPE-Full-Emergence of equivariancy}, the GCN-based critic with permutation invariancy has better rotation- and translation-invariancy than the MLP-based one for the whole training process. One possible reason is that for certain configurations of the state, permutation invariancy is a special type of Euclidean invariancy, e.g., 6 agents whose positions are in a circle with $60^\circ$ between adjacent ones, which displays a rotation-invariancy of $\rm{rot}_{60^\circ}$. However, there is no increase in rotation- and translation-invariancy for both GCN- and MLP-based critics, which may explain why actors learned with SEGNN-based critics have better performance. On the other hand, the actor learned with a SEGNN-based critic emerges better rotation- and translation-invariancy than the MLP and GCN baselines. The actor learned with a GCN-based critic achieves similar rotation-invariancy but worse translation-invariancy than the one learned with an MLP-based critic. Note that the translation-invariancy for actors is initially high, because in the observation only the agent's own position is absolute and can be modified by translation, whereas all other values are relative and translation-invariant. The randomly initialized actors will therefore output similar actions. 
\begin{figure}[ht]
\begin{center}
\centerline{\includegraphics[width=0.77\columnwidth]{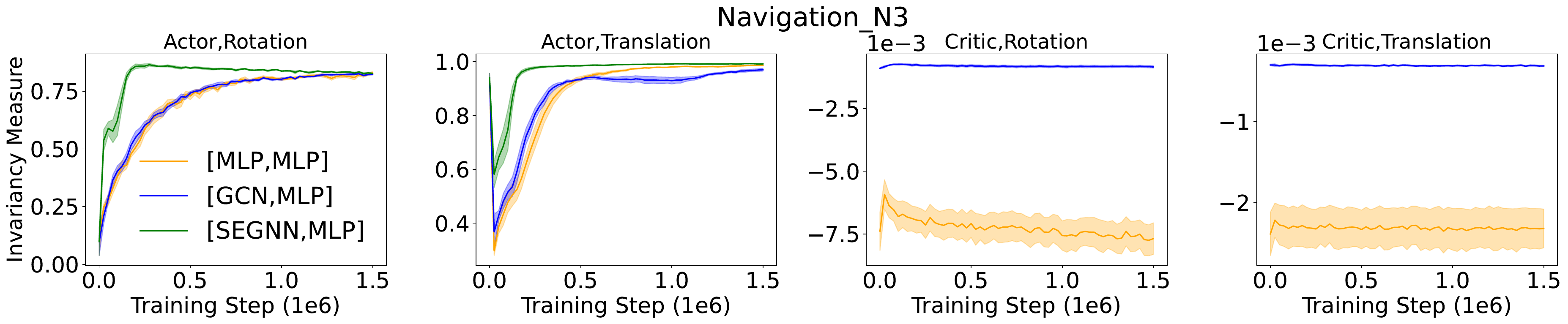}}
\centerline{\includegraphics[width=0.77\columnwidth]{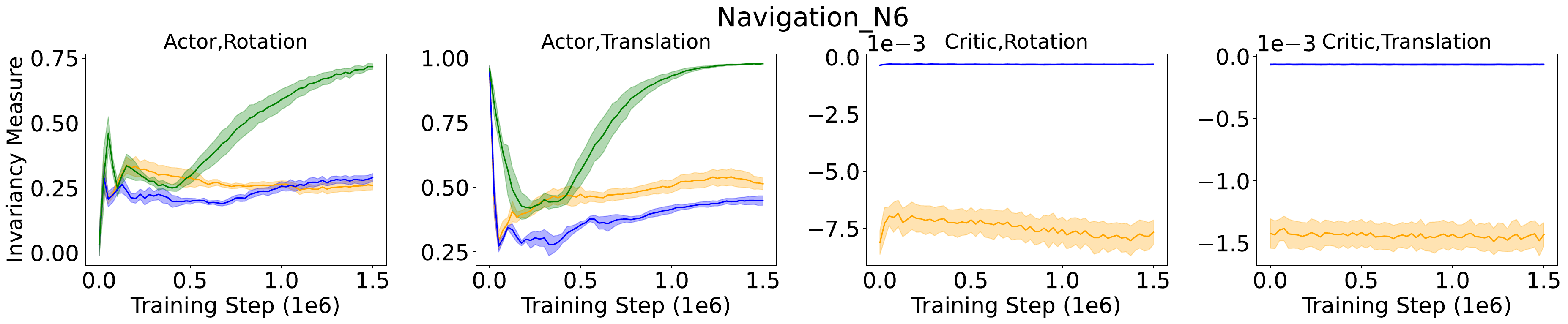}}
\centerline{\includegraphics[width=0.77\columnwidth]{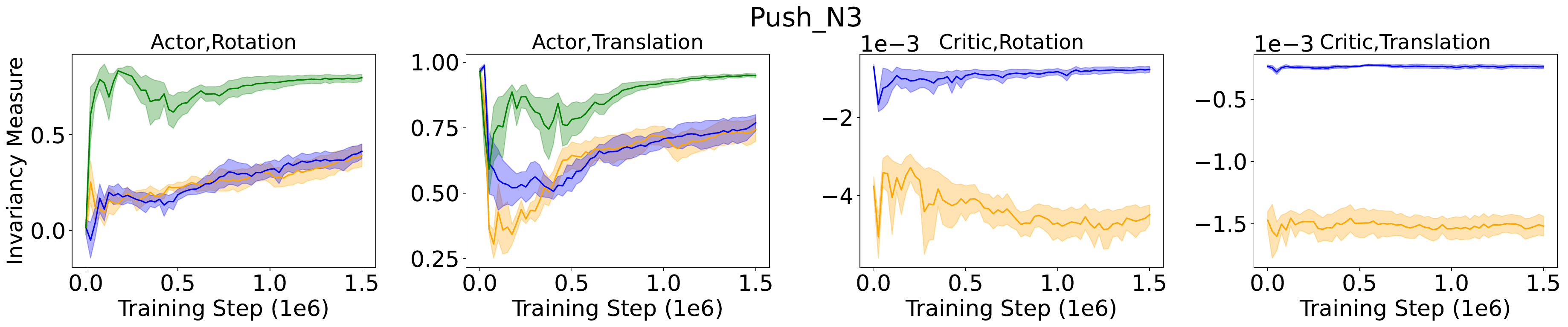}}
\centerline{\includegraphics[width=0.77\columnwidth]{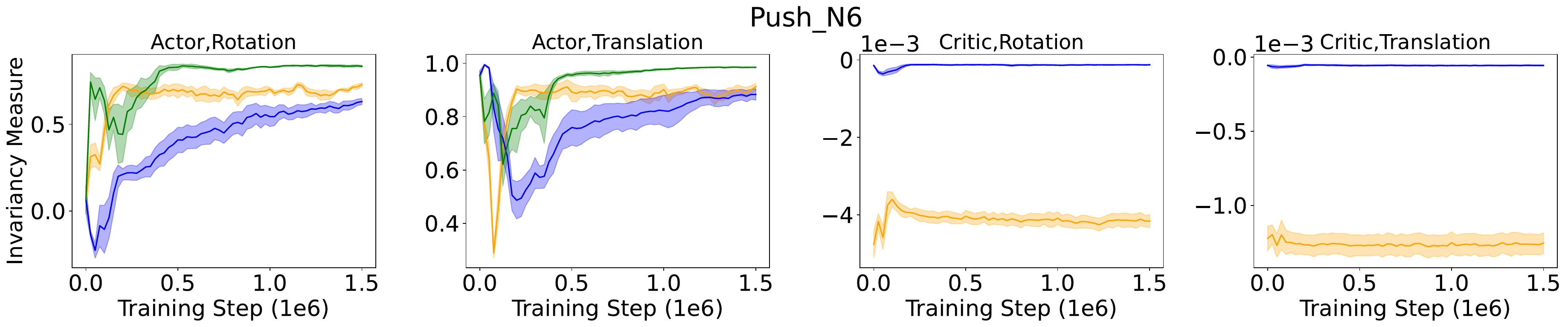}}
\centerline{\includegraphics[width=0.77\columnwidth]{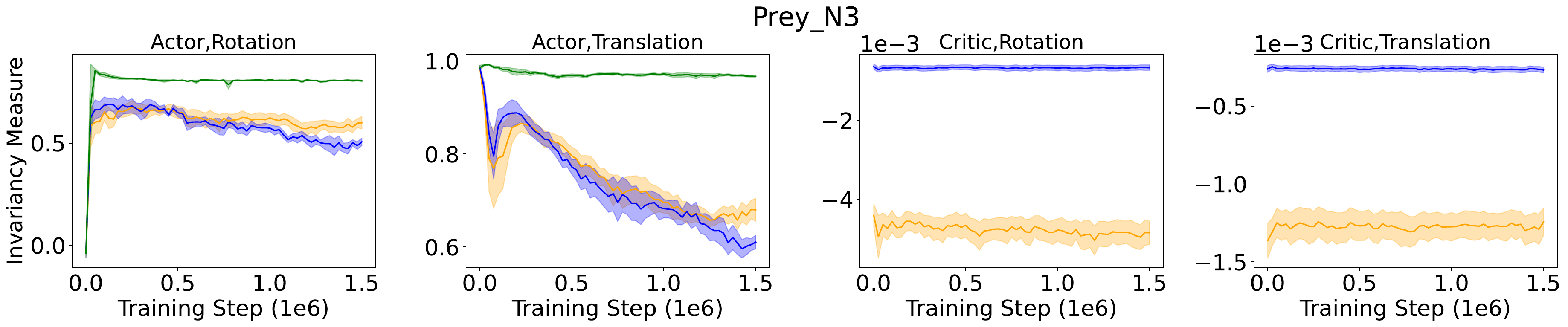}}
\centerline{\includegraphics[width=0.77\columnwidth]{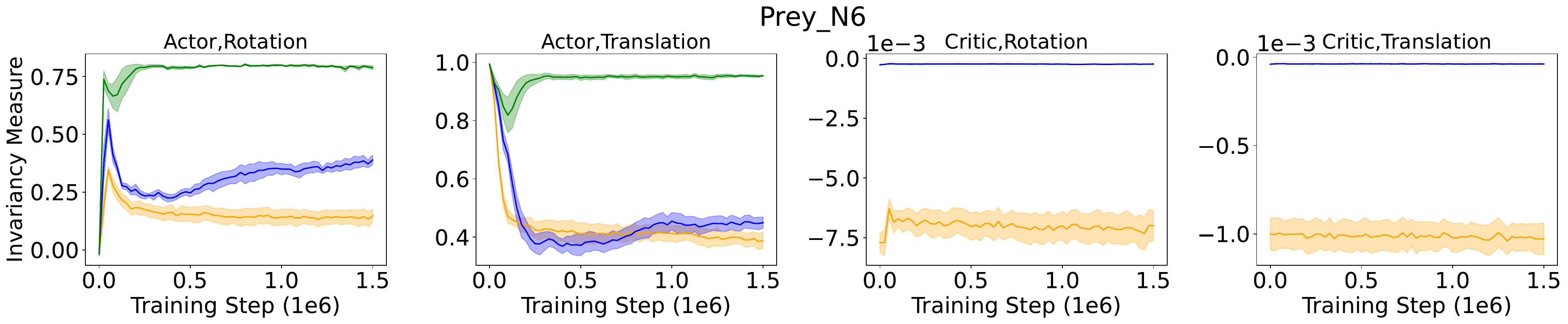}}
\caption{The emergence of invariancy and equivariancy in non-SEGNN-based critic and actor, respectively, in MPE. }
\label{fig:MPE-Full-Emergence of equivariancy}
\end{center}
\vskip -0.3in
\end{figure}

\newpage\clearpage
\subsection{Invariancy Measure in SMAC}
\label{sec:invariancy measurements in SMAC}
The critic used by MAPPO is a value function, so the invariancy for critic is defined in terms of $V$ instead of $Q$. The action of controllable ally agents in SMAC is categorical, so we use KL divergence instead of cosine similarity to measure the difference between two probability distributions. The action is the union of the set of move directions, attack enemies, stop and no operation, i.e.,  $A^a=\{\text{north, south, east, west}\}\cup \{\text{attack[enemy\_id]}\}_{\text{enemy\_id}\in \mathcal{N}^{e}} \cup \{\text{stop, no-op}\}$. Under rotation, only move directions will be changed. Its hard to measure how the move direction will change for an arbitrary rotation angle, so we only consider (counter clockwise) rotation angles of $[90^\circ,180^\circ,270^\circ]$, which will roll the move direction accordingly. For example, for a (counter clockwise) rotation angle of $90^\circ$, move to the north will change to move to the west. The translation for actor is already achieved in scenarios in SMAC due to the missing of absolute positions in the observation, so it is not included here. 

\begin{align}
\label{SMAC-critic-rotation}
\textstyle
    {\rm{Invariancy_{V}^{rot}}}=-\frac{1}{|A|}\sum_{\alpha\in A}
    |V(s)-V({\rm rot}_\alpha[s])|
\end{align}

\begin{align}
\label{SMAC-critic-translation}
\textstyle
    {\rm{Invariancy_{V}^{transl}}}=-\frac{1}{|L|}\sum_{l\in L}
    |V(s)-V({\rm transl}_l[s])|
\end{align}

\begin{align}
\label{SMAC-actor-rotation}
\textstyle
    {\rm{Invariancy_{\nu}^{rot}}} = \frac{1}{|A|N}\sum_{i\in\mathcal{N}}\sum_{\alpha\in A}
    {\rm KL}\left(
    {\rm rot}_\alpha[\nu^i(\cdot|o^i(s)],~
    \nu^i(\cdot|o^i({\rm rot}_\alpha[s]))
    \right)
\end{align}

, where ${\rm rot}_\alpha[\cdot]$ and ${\rm transl}_l[\cdot]$ performs the rotation by $\alpha$ and translation by $l$, respectively, and ${\rm KL}(\cdot,\cdot)$ measures the KL divergence. $A$ is the list of angles which we use $[90^\circ,180^\circ,270^\circ]$, $L$ is the list of translations which we use $[(-l_x,0),(-0.5l_x,0),(0.5l_x,0),(l_x,0)]$ for translations along $x$ axis and $[(0,-l_y),(0,-0.5l_y),(0,0.5l_y),(0,l_y)]$ for translations along $y$ axis (the normalized size of the map is $l_x$ by $l_y$, where $l_x=l_y=1$ for the scenarios in SMAC. We calculate these four metrics by averaging over state and observation collected in different time steps in $32$ episodes. The ranges of those metrics are not important. By construction, the larger the values of ${\rm{Invariancy_{V}^{rot}}},{\rm{Invariancy_{V}^{transl}}},$ and ${\rm{Invariancy_{\nu}^{rot}}}$
are, the closer the behaviors the non-SEGNN-based architecture are compared to those of the SEGNN-based architecture, which has the largest possible values for those metrics, in terms of their invariancy to the corresponding transformations.

\newpage\clearpage
\subsection{The Emergence of Invariancy and Equivariancy in SMAC.}
\label{sec:The emergence of invariancy and equivariancy in SMAC}
As illustrated in Figure \ref{fig:SMAC-Full-Emergence of equivariancy}, there is no obvious emergence of the rotation-invariancy for both actor and critic. This can be caused by the competitive nature of the scenarios in SMAC, where we also see a decrease in actors' translation-invariancy in competitive scenarios Prey\_N3 and Prey\_N6, as shown in the appendix. Another possible reason is that some of the categorical actions, the move directions \{north, south, east, west\}, are only rotation-equivariant to multiples of $90^\circ$ instead of arbitrary angles and therefore break the overall invariancy of the game.
\begin{figure}[ht]
\begin{center}
\centerline{\includegraphics[width=.79\columnwidth]{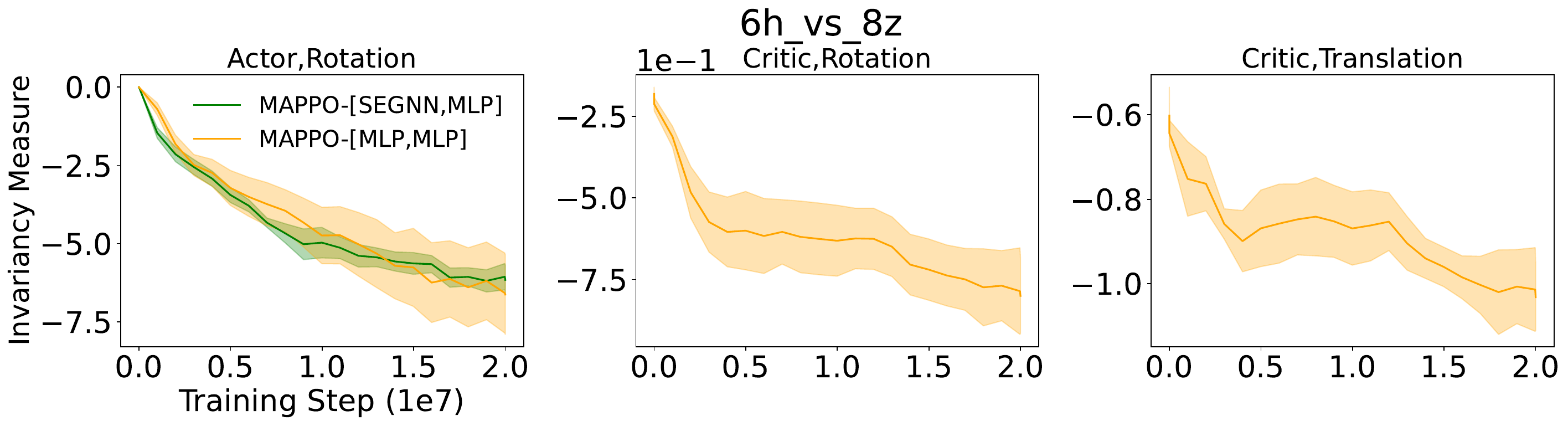}}
\centerline{\includegraphics[width=.79\columnwidth]{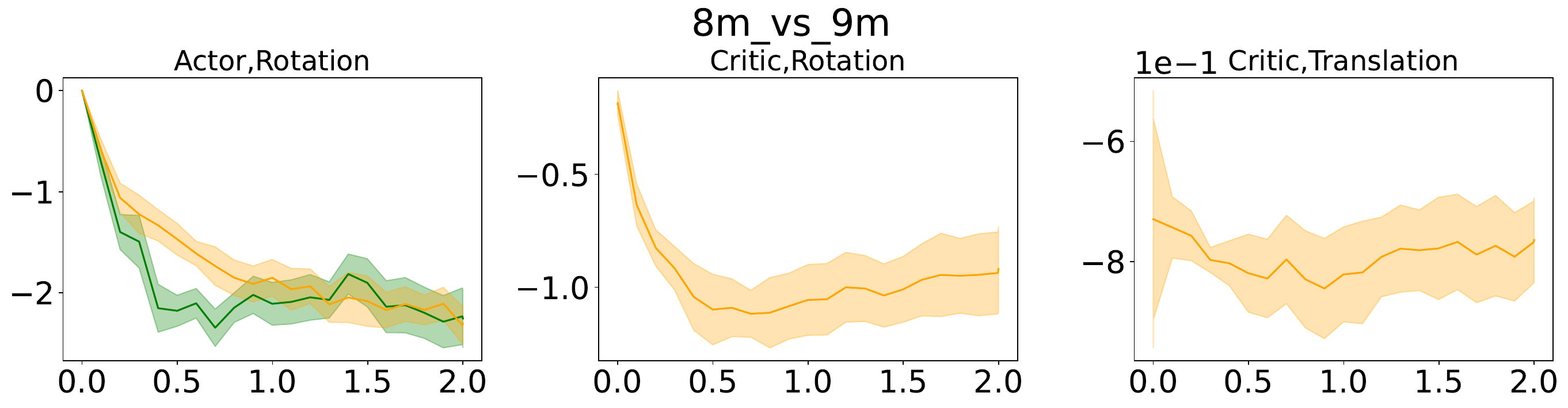}}
\caption{The emergence of invariancy and equivariancy in non-SEGNN-based critic and actor, respectively, in SMAC. }
\label{fig:SMAC-Full-Emergence of equivariancy}
\end{center}
\vskip -0.3in
\end{figure}

\newpage\clearpage
\section{Hyperparameters}
\label{sec:Hyperparameters}
\subsection{MPE}
\begin{table}[h]
\centering
\caption{Hyperparameters for MPE}
\begin{tabular}{ll} 
\toprule
Hyperparameter                       & Value                                                                           \\ 
\hline
Episode length                       & 25                                                                              \\
Number of training episodes          & 60000                                                                           \\
Discount factor                      & 0.95                                                                            \\
Batch size from replay buffer for [MLP, MLP] and [GCN, MLP]& 1024                                                                             \\
Batch size from replay buffer for [SEGNN, MLP] and [SEGNN, SEGNN]& 128                                                                             \\
Actor's learning rate for [MLP, MLP] and [GCN, MLP]& 1e-4                                                                           \\ 
Acror's learning rate for [SEGNN, MLP] in ordered scenarios& 3e-4,3e-4,1e-4,3e-4,3e-4,3e-4                                                                           \\ 
Acror's learning rate for [SEGNN, SEGNN] in ordered scenarios& 1e-4,3e-4,3e-5,3e-5,3e-4,3e-4                                                                           \\ 
Critic's learning rate for [MLP, MLP] and [GCN, MLP]& 1e-3                                                                           \\ 
Critic's learning rate for [SEGNN, MLP] in ordered scenarios& 1e-3,3e-4,1e-3,1e-3,1e-3,1e-3                                                                           \\ 
Critic's learning rate for [SEGNN, SEGNN] in ordered scenarios& 1e-3,1e-3,1e-3,1e-3,1e-3,1e-3                                                                           \\ 
Actor's and critic's learning rates for transfer learning in Navigation& 2e-4,1e-3                                                        \\
Actor's and critic's learning rates for transfer learning in Push& 3e-5,1e-3                                                        \\
Actor's and critic's learning rates for transfer learning in Prey& 3e-4,1e-3                                                        \\
Graph used by SEGNN for 6h\_vs\_8z & complete graph
\\
\#episodes per evaluation& 200                                                                           \\
\#seeds& 5                                                                     \\

\bottomrule
\end{tabular}
The ordered scenarios are: Navigation\_N3, Navigation\_N6, Push\_N3, Push\_N6, Tag\_N3, Tag\_N6.
\\Unless specified, the mentioned hyperparameter is applied to all scenarios in MPE.
\\Actors' learning rates are searched in [1e-4, 3e-5, 1e-5]. 
\\Critics' learning rates are searched in [1e-3, 3e-4, 1e-4]. 
\label{Hyperparameters for MPE}
\end{table}

\subsection{The MuJoCo Continuous Control Tasks}
\begin{table}[h!]
\centering
\caption{Hyperparameters for 2D tasks (cartpole, reacher, swimmer) in the MuJoCo tasks}
\begin{tabular}{ll} 
\toprule
Hyperparameter                       & Value                                                                           \\ 
\hline
Episode length                       & 1000                                                                              \\
Number of training steps          & 1.1e6                                                                          \\
Discount factor                      & 0.99                                                                            \\
Batch size                      & 256                                                                            \\

Acror's learning rate for [MLP, MLP] and [GCN, MLP] & 1e-4                                                                           \\ 
Acror's learning rate for [SEGNN, SEGNN]& 5e-5                                                                           \\ 
Critic's learning rate for [MLP, MLP] and [GCN, MLP] & 1e-4                                                                                                                                                \\ 
Critic's learning rate for [SEGNN, SEGNN]& 5e-5                                                                                                                                                \\ 
Graph used by SEGNN in all 2D tasks& complete graph
\\
\#episodes per evaluation for cartpole (balance, sparse)& 20                                                                           \\
\#episodes per evaluation for single- and multi-agent reacher& 10                                                                           \\
\#episodes per evaluation for single- and multi-agent swimmer& 100                                                                           \\

\#seeds for cartpole (balance, sparse), single-agent reacher, and single-agent swimmer& 5                                                                         \\
\#seeds for multi-agent reacher, and multi-agent swimmer& 10                                                                       \\

\bottomrule
\end{tabular}
\\Unless specified, the mentioned hyperparameter is applied to all the tasks and the algorithms.
\\The hyperparameters except SEGNN's learning rates are the default ones used by \cite{rezaei2022continuous}
\label{Hyperparameters for 2D MuJoCo tasks}
\end{table}

\begin{table}[ht!]
\centering
\caption{Hyperparameters for 3D tasks (hopper, walker) in the MuJoCo tasks}
\begin{tabular}{ll} 
\toprule
Hyperparameter                       & Value                                                                           \\ 
\hline
Episode length                       & 1000                                                                              \\
Number of training steps          & 1.1e6                                                                          \\
Discount factor                      & 0.99                                                                            \\
Batch size                      & 256                                                                            \\

Acror's learning rate & 3e-4                                                                           \\ 
Critic's learning rate& 3e-4                                                                                                                                                                                 \\ 
Graph used by SEGNN in all 3D tasks& complete graph
\\
\#episodes per evaluation & 10                                                                           \\

\#seeds & 5                                                                         \\

\bottomrule
\end{tabular}
\\Unless specified, the mentioned hyperparameter is applied to all the tasks and the algorithms.
\\The hyperparameters are the default ones used by \cite{pmlr-v202-chen23i}
\label{Hyperparameters for 3D MuJoCo tasks}
\end{table}

\newpage\clearpage
\subsection{SMAC}
\begin{table}[h]
\centering
\caption{Hyperparameters for SMAC}
\begin{tabular}{ll} 
\toprule
Hyperparameter                       & Value                                                                           \\ 
\hline
Episode length                       & 400                                                                              \\
Number of training steps          & 2e7                                                                          \\
Discount factor                      & 0.99                                                                            \\
\#Rollout threads& 8                                                                             \\
\#Training threads& 1
\\
PPO epoch for 8m\_vs\_9m, 6h\_vs\_8z             & 5,10                                          \\ 
\# mini-batch for 8m\_vs\_9m, 6h\_vs\_8z             & 1,4                                          \\ 
The degree of nearest neighbor graph used by SEGNN for 8m\_vs\_9m & 4
\\
The degree of nearest neighbor graph used by SEGNN for 6h\_vs\_8z & 13 (complete graph)
\\
Acror's learning rate for all algorithms& 5e-4                                                                           \\ 
Critic's learning rate for all algorithms& 5e-4                                                                                                                                                                        \\ 
\#episodes per evaluation& 32                                                                           \\
\#seeds& 5                                                                           \\

\bottomrule
\end{tabular}
\\Unless specified, the mentioned hyperparameter is applied to both 6h\_vs\_8z and 8m\_vs\_9m.
\\The hyperparameters are the default ones used by the original MAPPO \cite{yu2022surprising}
\label{Hyperparameters for SMAC}
\end{table}

\newpage\clearpage
\section{Number of Parameters in Neural Networks}
\label{sec:Number of parameters in neural networks}
\subsection{MPE}
\begin{tabular}{ |p{2.2cm}|p{1.9cm}|p{1.9cm}|p{1.9cm}|p{1.9cm}|p{1.9cm}|p{1.9cm}|  }
 \hline
 \multicolumn{7}{|c|}{MPE} \\
 \hline
 Alg/Env & Navigation\_N3&Navigation\_N6&Push\_N3&Push\_N6&Prey\_N3&Prey\_N6\\
 \hline
 MLP (critic)   & 22913    &38273&   22145&32129&23681&39809\\
 GCN (critic)&   37505    &40577&   36993&38529&38017&41089\\
 SEGNN (critic) &33791    &33791&   42344&42344&42344&42344\\
 MLP (actor)    &18690    &20226&   18434&19202&18946&20482\\
 SEGNN (actor)& 40553    &40553&   48532&48532&48532&48532\\
 \hline
\end{tabular}

\subsection{The MuJoCo 2D Continuous Control Tasks}
\begin{tabular}{ |p{2.2cm}|p{1.9cm}|p{1.9cm}|p{3.0cm}|p{1.9cm}|p{3.2cm}|}
 \hline
 \multicolumn{6}{|c|}{The MuJoCo 2D continuous control tasks} \\
 \hline
 Alg/Env & Cartpole&Reacher&Multi-agent Reacher&Swimmer&Multi-agent Swimmer\\
 \hline
 MLP (critic)   & 70401    &74497&   74497&71937&71937\\
 GCN (critic)&   37505    &40065&   40065&38529&38529\\
 SEGNN (critic) &33911    &34169&   34169&43193&43193\\
 MLP (actor)    &70145    &74242&   71425&71682&71937\\
 GCN (actor)&   36993    &39426&   38273&37890&38273\\
 SEGNN (actor)& 32694    &39234&   34003&48971&43124\\
 \hline
\end{tabular}

\subsection{The MuJoCo 3D Continuous Control Tasks}
\begin{tabular}{ |p{2.2cm}|p{1.9cm}|p{1.9cm}|}
 \hline
 \multicolumn{3}{|c|}{The MuJoCo 3D continuous control tasks} \\
 \hline
 Alg/Env & Hopper&Walker\\
 \hline
 MLP (critic)   & 81665    &90881\\
 GCN (critic)   & 72705    &73217\\
 SEGNN (critic) &51781    &69329\\
 MLP (actor)   & 80387    &88838\\
 GCN (actor)   & 72963    &74246\\
 SEGNN (actor) &67511    &134508\\
 \hline
\end{tabular}

\subsection{SMAC}
\begin{tabular}{ |p{2.2cm}|p{1.9cm}|p{1.9cm}|}
 \hline
 \multicolumn{3}{|c|}{SMAC} \\
 \hline
 Alg/Env & 6h\_vs\_8z&8m\_vs\_9m\\
 \hline
 MLP (critic)   & 12529    &12727\\
 SEGNN (critic) &33811    &33765\\
 \hline
\end{tabular}

\newpage\clearpage
\section{Computing Resources}
The code is implemented by PyTorch, and runs on NVIDIA Tesla V100 GPUs with 32 CPU cores. For MPE, a single run with [MLP, MLP], [GCN, MLP], [SEGNN, MLP], [SEGNN, SEGNN] takes approximately 2 hours, 3 hours, 7 hours, 10 hours to run, respectively. For the MuJoCo continuous control tasks, [MLP, MLP] takes approximately 3 hours to run, [GCN, MLP] takes approximately 5 hours to run, and [SEGNN, SEGNN] takes approximately 4 days to run. For SMAC, [MLP, MLP] and [SEGNN, MLP] takes approximately 8 hours and 4 days to run, respectively. 

\end{document}